\newtheorem{Mytheo}{Theorem}
\begin{document}
%
\title{CHAOS: Accurate and Realtime Detection of Aging-Oriented Failure Using Entropy}
%
%
%
%

\author{Pengfei~Chen,
        Yong~Qi,~\IEEEmembership{Member,~IEEE,}
        and~Di~Hou,
\IEEEcompsocitemizethanks{\IEEEcompsocthanksitem P.Chen, Y.Qi and D.Hou are with the Department
of Computer Science and Technology, Xi'an Jiaotong University, Xi'an,
China, 710049.\protect\\
E-mail: chenpengfei@outlook.com }}
%
%

\markboth{IEEE TRANSACTIONS ON COMPUTERS, January~2015}%
{Shell \MakeLowercase{\textit{et al.}}: Bare Advanced Demo of IEEEtran.cls for Journals}
%



\IEEEtitleabstractindextext{%
\begin{abstract}
Even well-designed software systems suffer from chronic performance degradation, also named ``software aging", due to  internal (e.g. software bugs) and external (e.g. resource exhaustion) impairments.
These chronic problems often fly under the radar of software monitoring systems before causing severe impacts (e.g. system failure). Therefore it's a challenging issue how to timely detect these problems to prevent system crash.  Although a large quantity of approaches have been proposed to solve this issue, the accuracy and effectiveness of these approaches are still  far from satisfactory due to the  insufficiency of aging indicators adopted by them.  
In this paper, we present a novel  entropy-based aging indicator, Multidimensional Multi-scale Entropy (MMSE). MMSE employs the complexity embedded in  runtime performance metrics to indicate software aging and leverages multi-scale and multi-dimension integration to tolerate system fluctuations. Via theoretical proof and experimental evaluation, we demonstrate that MMSE satisfies \textit{Stability}, \textit{Monotonicity} and \textit{Integration} which we conjecture that an ideal  aging indicator should have. Based upon MMSE, we develop three failure detection approaches encapsulated in a proof-of-concept named CHAOS. 
The experimental evaluations in a Video on Demand (VoD) system 
and in a real-world  production system, AntVision, show that CHAOS can detect the failure-prone state  in  an extraordinarily high accuracy and a near 0  \textit{Ahead-Time-To-Failure ($ATTF$)}. Compared to previous approaches, 
 CHAOS improves the detection  accuracy  by about 5 times and reduces the $ATTF$ even by 3 orders of magnitude. 
In addition,  CHAOS is light-weight enough to satisfy the realtime requirement.  
\end{abstract}

\begin{IEEEkeywords}
Software aging, Multi-scale entropy, Failure detection, Availability.
\end{IEEEkeywords}}

\maketitle

\IEEEdisplaynontitleabstractindextext

%
\IEEEpeerreviewmaketitle

\ifCLASSOPTIONcompsoc
\IEEEraisesectionheading{\section{Introduction}\label{sec:introduction}}
\else
\section{Introduction}
\label{sec:introduction}
\fi
Software is becoming the backbone of modern society. Especially with the development of cloud computing, more and more traditional services (e.g. food ordering, retail) 
are deployed in the cloud and function as distributed software systems. Two common characteristics of those software systems, namely long-running and high complexity increase the risks of faults  and resource exhaustion. With the accumulation of faults or resource consumption, software systems may suffer from chronic performance degradation, failure rate/probability increase and even crash called ``software aging" \cite{0,1,2,3,4} or ``Chronics" \cite{5}. 

Software aging 
has been extensively studied for two decades since it was first quantitatively analyzed in AT\&T lab in 1995 \cite{6}.
This phenomenon has been widely observed in variant software systems nearly spanning across all software stacks such as cloud computing infrastructure (e.g. Eucalyptus) \cite{7,8}, virtual machine monitor (VMM)\cite{10,11}, operating system\cite{0,12}, Java Virtual Machine (JVM) \cite{4,13}, web server \cite{3,14} and so on. 
As the degree of software aging increasing, software performance  decreases gradually resulting in QoS (e.g. response time) decrease. 
 What's worse, software aging may lead to  unplanned system hang or crash 
The unplanned outage in enterprise  system especially in cloud platform can cause considerable revenue loss. A recent survey shows that IT downtime on an average leads to 14 hours of downtime per year, leading to \$26.5 billion lost \cite{15}. 
Therefore detecting and counteracting software aging are of essence for building long-running systems. 

An efficient and commonly used counteracting software aging strategy is ``software rejuvenation" \cite{2,3,4,17}, which proactively recovers the system from failure-prone state to a completely or partially new state  by cleaning the internal state. The benefit of rejuvenation strategies heavily depends on the time triggering rejuvenation. Frequent rejuvenation actions may decrease the system availability  or performance due to the non-ignorable planed downtime or overhead caused by such actions. Instead, an ideal rejuvenation strategy is to recover the system when it just gets near to the failure-prone state.
We name the failure-prone state caused by software aging as ``Aging-Oriented Failure"(AOF). Different from  transient failures caused by fatal errors e.g. segment fault or hardware failures, AOF is a kind of ``chronics" \cite{5} which means some durable anomalies  have emerged before system crash. Therefore AOF is likely to be detected. Accurately detecting AOF is a critical problem and the goal of this paper. However, to that end, we confront the following three challenges: 
\begin{itemize}
\item Different from fail-stop problems e.g. crash or hang  which have sufficient and observable indicators (e.g. exceptions), 
non-crash failures caused by software aging where the server does not crash but fails to process the request compliant with the SLA constraints, have no observable and sufficient symptoms to indicate them. These failures often fly under the radar of monitoring systems. Hence, finding out the underlying  indicator for software aging becomes the first challenge.  
\item The internal state (e.g. memory leak) changes and external state (e.g. workload variation) changes make the running system extraordinarily complex. Hence, the running system may not be described neither by a simple linear model nor by a single performance metric. How to cover the complexity and multi-dimension in the aging indicator is the second challenge.
\item Fluctuations or noise may be involved in collected performance metrics due to the highly dynamic property of the running system. And cloud computing exacerbates the dynamics due to its elasticity and flexibility  (e.g. VM creation and deletion). How to mitigate the influence of noise and keep the detection approach noise-resilient is the third challenge.   
\end{itemize}
 

To address the aforementioned challenges, 
we conjecture that an ideal aging indicator should have \textit{Monotonicity} property to reveal the hidden aging state, \textit{Integration} property to comprehensively describe aging process and \textit{Stability} property to tolerate system fluctuation. In this paper, we propose a novel  
aging indicator named MMSE. According to our observation in practice and qualitative proof,  entropy monotonously increases with the degree of software aging when the failure probability is lower than 0.5.  
And MMSE is a complexity oriented and model-free indicator without deterministic linear or non-linear model assumptions. In addition, the multi-scale  feature mitigates the influence of system fluctuations and the multi-dimension feature makes MMSE more comprehensive to describe software aging. Hence, MMSE satisfies the three properties namely \textit{Stability}, \textit{Monotonicity} and \textit{Integration}, which we conjecture that an ideal aging indicator should have. Based upon MMSE, we develop three AOF detection approaches encapsulated in a proof-of-concept, CHAOS. 
To further decrease the overhead caused by CHAOS, we reduce the runtime performance metrics  from 76 to 5 without significant information loss by a principal component analysis (PCA) based variable selection method. The experimental evaluations in a VoD system 
 and in a real production system, AntVision \footnote{www.antvision.net},  show that CHAOS  has a strong power to detect failure-prone state  with a high accuracy and a small  $ATTF$. Compared to  precious approaches 
CHAOS increases the detection  accuracy by about 5 times and reduces the $ATTF$ significantly even by 3 orders of magnitude. According to our best knowledge, this is the first work to leverage  entropy  to indicator software aging. The contribution of this paper is three-fold: 
\begin{itemize}
\item We demonstrate that entropy increases with software aging and verify this conclusion via experimental practice and quantitative proof.
\item We propose a novel aging indicator named MMSE. MMSE employs the complexity embedded in  multiple runtime performance metrics to measure software aging and leverages multi-scale and multi-dimension integration to tolerate system fluctuations ,which makes MMSE satisfy the properties: \textit{Stability}, \textit{Monotonicity} and \textit{Integration}.  
\item We design and implement a proof-of-concept named CHAOS, and evaluate the accuracy of three  failure detection approaches based upon  MMSE encapsulated in CHAOS in a VoD system and a real production system, AntVision. The experimental results show that CHAOS improves the detection accuracy by about 5 times and reduces the $ATTF$ by 3 orders of magnitude compared to previous approaches.
\end{itemize}

The rest of this paper is organized as follows. We demonstrate the motivations of this paper in Section  \uppercase\expandafter{\romannumeral 2}. Section \uppercase\expandafter{\romannumeral 3} shows our solution to detect the failure-prone state and the overview of CHAOS.  And in Section \uppercase\expandafter{\romannumeral 4}, we describe the detailed design of CHAOS including: metric selection, MSE and MMSE calculation procedure, and failure-prone state detection approaches. Section \uppercase\expandafter{\romannumeral 5} shows the evaluation results and comparisons to previous approaches. In Section \uppercase\expandafter{\romannumeral 6} we state the related work briefly. Section \uppercase\expandafter{\romannumeral 7} concludes this paper.


%
%
%
%

%
\section{Motivation}
The accuracy of  Aging-Oriented Failure (AOF) detection approaches is largely determined by the aging indicators. A well-designed aging indicator can precisely indicate the AOF. If the subsequent rejuvenations are always conducted  at the real failure-prone state, the rejuvenation cost will tend to be optimal. But unfortunately, prior detection approaches based upon explicit aging indicators \cite{0,1,3,4,14,19,20} don't function well especially in the face of dynamic workloads. They either miss some failures leading to a low recall or mistake some normal states as the failure states leading to a low precision. The insufficiency of previous indicators motivates us to seek novel indicators. We describe our motivations from the following aspects.  
\subsection{Insufficiency of Explicit Aging Indicators}
To distinguish the normal state and failure-prone state, a threshold should be preset on the aging indicator. Once the aging indicator exceeds the threshold, a failure occurs. Traditionally, a threshold is set on explicit aging indicators. For instance, if the CPU utilization exceeds $90\%$,  a failure occurs. However, it's not always the case.  The external observations do not always reveal  accurately the internal states. Here the internal states can be referred to as some normal events (e.g. a file reading, a packet sending) or abnormal events (e.g. a file open exception, a round-off error) generated in the system. In this paper we are more concerned about the abnormal events. Commonly, the internal state space is much smaller than the directly observed external state space. For example, the observed CPU utilization can be any real number in the range $0\% \sim 100\%$ while the abnormal events are very limited. Therefore an abnormal event may correlate with multiple observations. Still take the CPU utilization for example. When a failure-prone event happens, the CPU utilization may be  99\%,  80\% or even 10\%. Therefore the explicit aging indicator can not signify AOF sufficiently and accurately. And if the system fluctuation is taken into account, the situation may get even worse. And this is also a reason why it's so difficult to set an optimal threshold on the explicit aging indicators in order to obtain an accurate failure detection result.
\subsection{Entropy Increase in VoD System}
As the explicit aging indicators fall short in detecting Aging-Oriented Failure, we turn to implicit aging indicators for help. Some insights can be attained from \cite{24} and \cite{14}. Both of them treated software aging  as a complex process. Motivated by them, we believe entropy as a measurement of complexity has a potential to be an implicit aging indicator. 

In a real campus VoD (Video on Demand) system which is charge of sharing movies amongst students, we observe that entropy increases with the degree of software aging. The VoD system runs for 52 days until a failure occurs. By manually investigating the reason of failure, we assure it is an Aging-Oriented Failure. During the system running, the CPU utilization  is recorded to be processed later shown in Figure 1. We adopt MSE to calculate the entropy value of the CPU utilization of each day. The result is demonstrated in Figure 2. Figure 2 only shows the entropy value of the first four days ($Day1, Day2, Day3, Day4$) and the last four days ($Day49, Day50, Day51, Day52$). It's apparent to see the entropy values of the last four days are much larger than the ones of the first four days nearly at all scales. Especially, the entropy value of Day52 when the system failed is different significantly from others. However the raw CPU utilization at failure state seems normal which means we may not detect the failure state if using this metric as an aging indicator. Therefore, MSE seems a potential  aging indicator in this practice. 
\begin{figure}[!t]
\centering
\includegraphics[width=3.4in,height=2.5in]{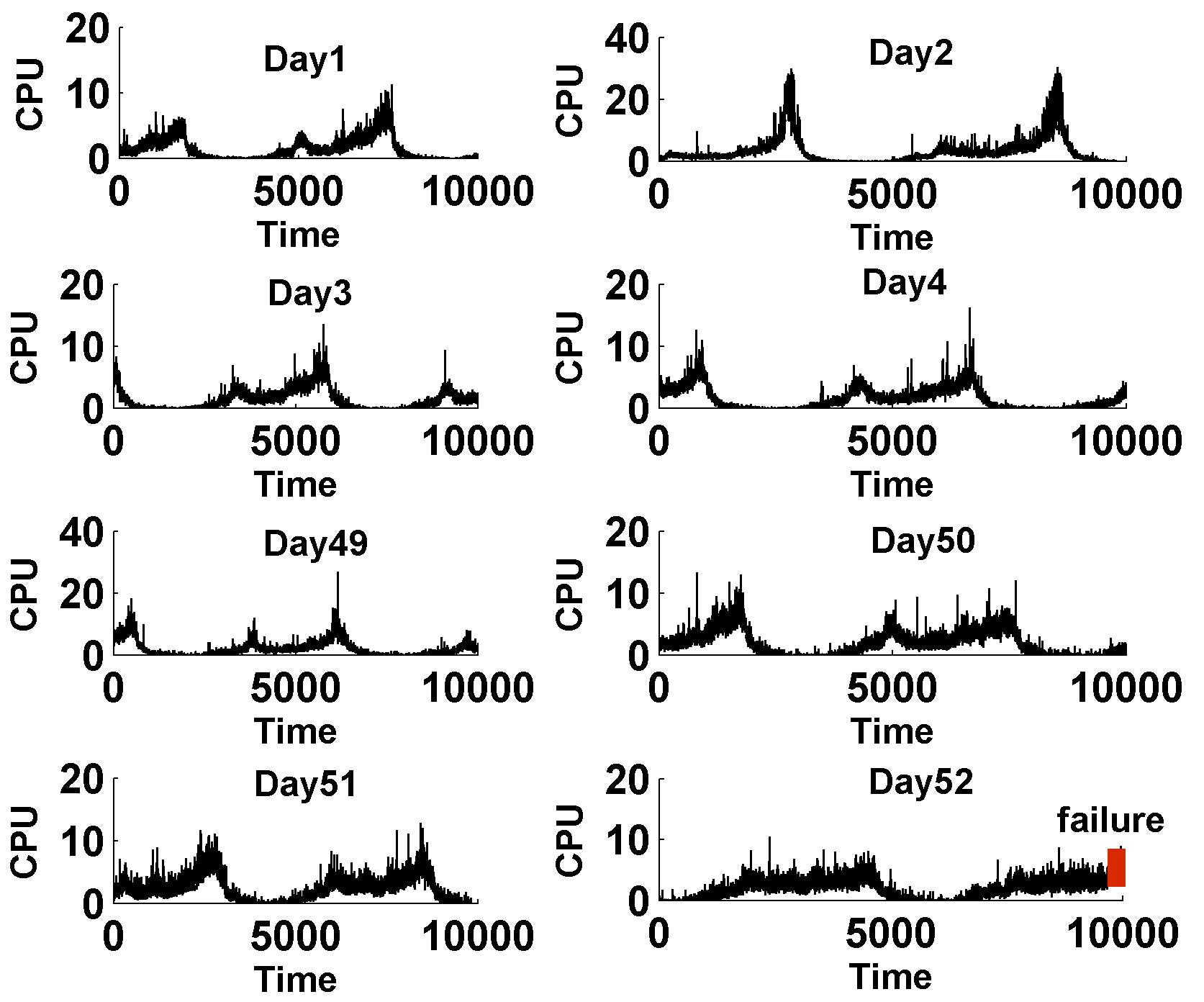}
\caption{The CPU utilization of a real VoD system. In this figure, we only show the CPU utilization of the first four days and the last four days.}
\label{Realdata}
\end{figure}
\begin{figure}[!t]
\centering
\includegraphics[width=3.3in,height=2.0in]{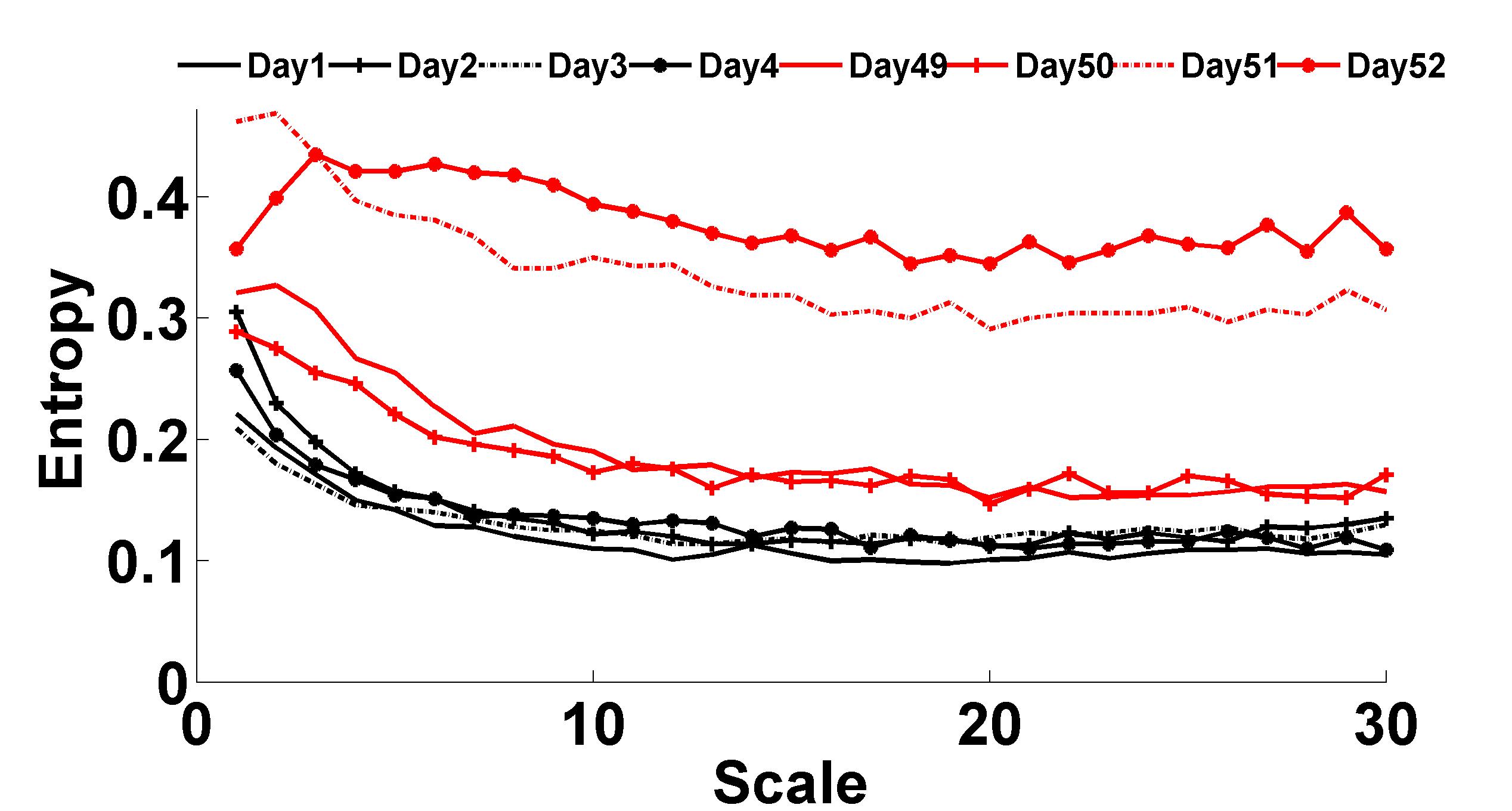}
\caption{The entropy value of a real VoD system at 30 scales}
\label{entropyreal}
\end{figure}
\subsection{Conjecture}
According to the above observation, we provide a high level abstraction of the properties that an ideal aging indicator should satisfy. 
\textit{\textbf{Monotonicity:}} Since software aging is a gradual deterioration  process, the aging indicator should also change consistently with the degree of software aging, namely increase or decrease monotonically. As the most essential property, monotonicity provides a foundation to detect Aging-Oriented Failure accurately.
\textit{\textbf{Stability:}} The indicator is capable of tolerating the noise or disturbance involved in the runtime performance metrics. 
\textit{\textbf{Integration:}} As software aging is a complex process  affected by  multiple factors, the indicator should  cover these influence from multiple data sources, which means it is the integration of multiple runtime metrics. 

It's worth noting that the property set my not be complete, any new property which can strength the detection power of aging indicators can be complemented. In a real-world system, it is extraordinarily hard to find such an ideal aging indicator. But it is possible to find a workaround which is close to the ideal indicator.
\section{Solution}
To provide accurate and effective  approaches to detect AOF, the first step is to propose an appropriate aging indicator satisfying the three properties mentioned in section \uppercase\expandafter{\romannumeral 2}.C. As described in the motivation, we find out MSE seems a potential indicator. But to satisfy all the three properties we proposed, some proofs and modifications are necessary. First of all, we need to quantitatively prove that entropy \footnote{As MSE is a special form of entropy, the properties of entropy are shared by MSE.} caters to $Monotonicity$ in software aging procedure which is illustrated in Appendix A. The proof tells us the system entropy increases with the degree of software aging when the probability of failure state ($p_{f}$) is smaller than the probability of working state ($p_{w}$). In most situations, the system can't provide acceptable services or goes to failure very soon once $p_{w}<p_{f}$.  Therefore we only take into account the scenario with a constraint  $p_{w}>p_{f}$. Under this constraint, the $Monotonicity$ of entropy in software aging is proved. However, the strict monotonicity could be biased a little due to the ever-changing runtime environment. Because of  the inherent ``multi-scale" nature of MSE, the $Stability$ property is strengthened. Via multi-scale transformation, some noises are filtered  or smoothed. In addition, the combination of entropy at multiple scales further mitigates the influence of noises.  The last but not the least property is $Integration$. Unfortunately, MSE is originally designed for analyzing single dimensional data rather than  multiple dimensional data. Thus, to satisfy integration property, we extend the original MSE to MMSE via several modifications. Finally, we achieve a novel software aging indicator, MMSE, which satisfies all the three properties. Based upon MMSE, we have implemented threshold based and time series based methods to detect AOF. To evaluate the effectiveness and accuracy of our approaches, we design and implement a proof-of-concept named CHAOS. The details of  CHAOS will be depicted in next section.
\section{System Design}
The architecture of CHAOS is shown in Figure 3. CHAOS mainly contains four modules: data collection, metric selection, MMSE calculation and crash detection. The data collection module collects runtime performance metrics from multiple data sources including application (e.g. response time), process (e.g. process working set) and operating system (e.g. total memory utilization). Amongst the raw performance metrics, collinearity is thought to be common which means some metrics are redundant. What's worse, a significant overhead is caused if all of performance data is analyzed by the  MMSE calculation module. Thus, a metric selection module is necessary to select a subset of the original metrics without major loss of quality. The selected metric subset is fed into MMSE calculation module to calculate the sample entropy at multiple scales in real time. Then the entropy  values are adopted to detect AOF by the crash detection module. The final result of CHAOS is a boolean value indicating whether failure-prone state occurs. We will demonstrate the details in the following parts.   
\begin{figure}[!t]
\centering
\includegraphics[width=3.3in,height=2.0in]{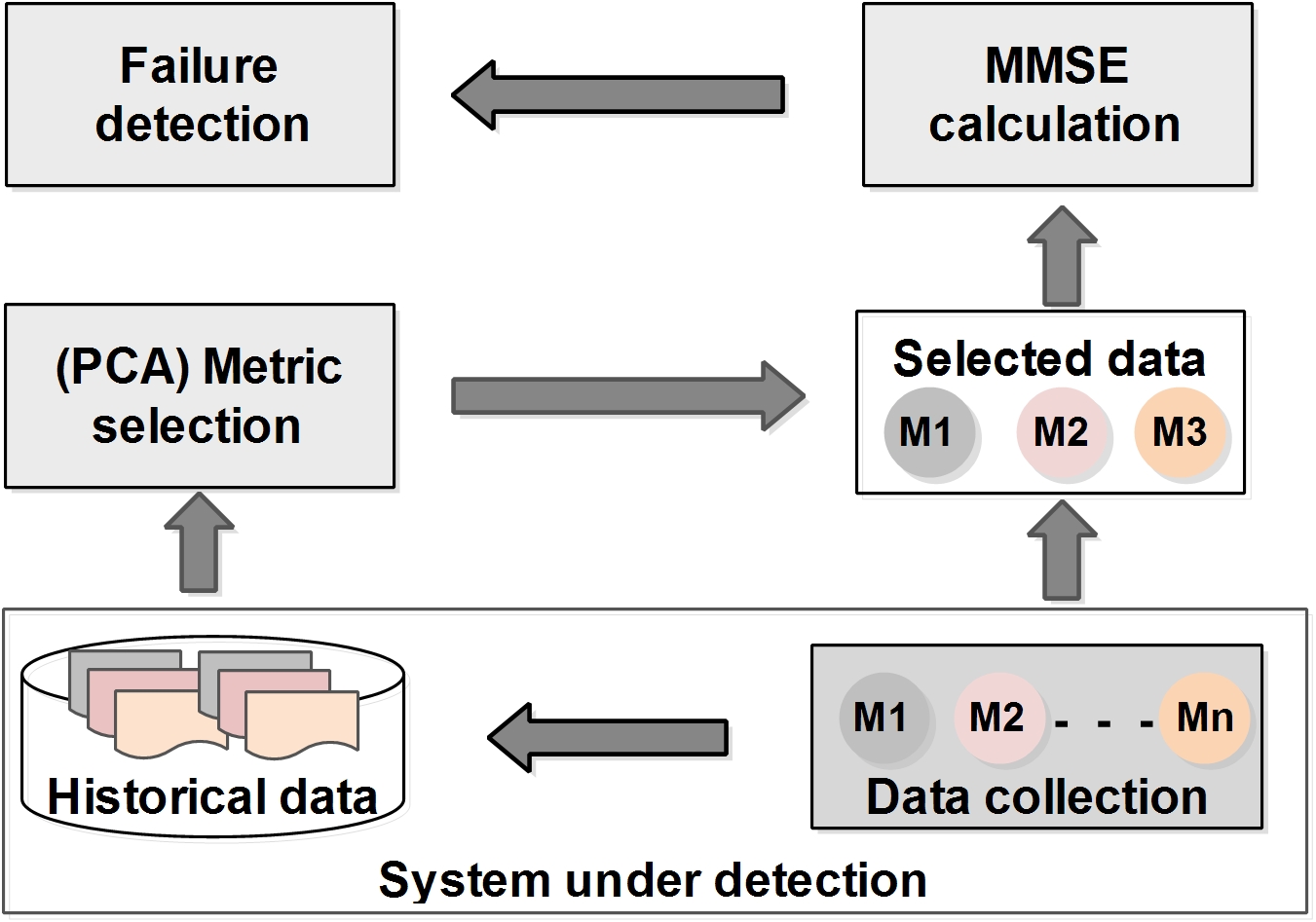}
\caption{The architecture of CHAOS}
\label{Entropy}
\end{figure}
\subsection{Metric Selection}
To get rid of the collinearity amongst the high-dimensional performance metrics and reduce computational overhead, we select a subset of metrics which can be used as a surrogate for a full set of metrics, without significant loss of information. Assume there are $M$ metrics, our goal is to select the best subset of any size $k$ from 1 to $M$.  To this end, PCA (Principal Component Analysis) variable selection method is introduced. 

As a classical  multivariate analysis approach, PCA is always used to transform orthogonally a set of variables which may be correlated to a set of variables which are linearly uncorrelated (i.e. PC).  let $\textbf{X}$ denote a column centered $n$x$M$ matrix, where $M$ denotes the number of metrics, $n$ denotes the number of observations. Via PCA, the matrix $\textbf{X}$ could be reconstructed approximately by $p$ PCs, where  $p \ll M$. These PCs are also called latent factors which are given new physical meanings. Mathematically, $\textbf{X}$ is transformed into a new $n$x$k$ matrix of principal component scores $\textbf{T}$ by a loading or weight $k$x$M$ matrix $\textbf{W}$ if keeping only the $k$ principal component, namely $\textbf{T}=\textbf{X}{{\textbf{W}}^{T}}$ where each column of  $\textbf{T}$ is called a PC. The loading factor $\textbf{W}$ can be obtained by calculating the eigenvector of ${{\textbf{X}}^{T}\textbf{X}}$ or via  singular value decomposition (SVD)\cite{28}. In stead, we leverage PCA to select variables rather than reduce dimensions. 

In order to achieve that goal, we  first introduce a well-defined numerical criteria in order to rank the subset of variables. 
Here choose GCD \cite{30,31} as a criteria. GCD is a measurement of the closeness of two subspaces spanned by different variable sets. In this paper, GCD is a measure of similarity  between the principal subspace spanned by the $k$ specified PCs and the subspace spanned by a given $p$-variable subset of the original $M$-variable data set. By default, the specified PCs are usually the first $k$ PCs and the number of variables and PCs is the same ($k=p$). The detailed description of GCD could be found in \cite{30}. 

Then we  need a search algorithm to seek the best $p$-variable subset of the full data set. 
In this paper, we adopt a heuristic simulated annealing algorithm to search for the best \textit{p}-variable subset. The algorithm is described in detail in \cite{29}. In brief, an initial \textit{p}-variable subset is fed into the simulated annealing algorithm, then the GCD criterion value is calculated. Further, a subset in the neighborhood \footnote{The neighborhood of a subset \textit{S} is defined as a group of \textit{k}-variable subsets which differ from \textit{S} by only a single variable.} of the current subset is randomly selected. The alternative subset is chosen if its GCD criteria value is larger than the one of the current subset or with a probability $e^{\frac{ac-cc}{t}}$ if the GCD criteria value of the alternative subset (ac) is smaller than the one of current subset (cc) where $t$ denotes the temperature and decreases throughout the iterations of the algorithm. The algorithm stops when the number of iterations exceeds the preset threshold. The merit of the simulated annealing algorithm is that the best \textit{p}-variable subset can be obtained with a reasonable computation overhead even the number of variables is very large. 

With  the well-defined GCD criteria and the simulated annealing search algorithm, we can reduce the high-dimensional runtime performance metrics (e.g. 76) to very low-dimensional data set  (e.g. 5) with very little information loss. And the computation overhead is decreased significantly.  

\subsection{Proposed Multidimensional Multi-scale Entropy}
A well-known measurement of system complexity is the classical 
Shannon entropy \cite{35}. However, Shannon entropy is only concerned with  the instant entropy at a specific time point. It can't capture the temporal structures of one time series completely leading to statistical characteristic loss and even false judgment. MSE proposed by Costa et al \cite{27} is used to quantify the amount of structures (i.e. complexity) embedded in the time series at multiple time scales. A system without structures would exhibit a significant entropy decrease with an increasing time scale. The algorithm of MSE includes two phases:sample entropy \cite{59} calculation and coarse-graining. 
Given a positive number $m$, a random variable $X$ and a time series $\textbf{X}=\{X(1),X(2), \cdots, X(N)\}$ with length $N$, $\textbf{X}$ is partitioned into consecutive segments. Each segment is represented by a $m$-length vector: $u_{m}(t)=\{x(t),X(t+1), \cdots, X(t+m-1)\}, 1\leq t \leq N-m+1$ where $m$ could be recognized as the embedded dimension and recommended as $m=2$ \cite{62}. let $n_{i}^{m}(r)$ denote the number of segments that satisfy $d(u_{m}(i),u_{m}(j)) \leq r, i \neq j$ where $i \neq j$ guarantees that self-matches are excluded, 
$r$ is a preset threshold indicating the tolerance level for two segments to be considered similar and recommended as $r=1.5*\sigma$ \cite{62} where $\sigma$ is the standard deviation of the original time series.  $d(\cdot)=max \{|X(i+k)-X(j+k)|: 1 \leq k \leq m-1 \}$  represents the maximum of the absolute values of differences between $u_{m}(i),u_{m}(j)$ measured by Euclidean distance which is adopted in this paper
Let $lnC_{i}^{m}(r)=ln\frac{u_{m}(i)}{N-m}$ represent the natural logarithm of the probability that any segment $u_{m}(j)$ is close to segment $u_{m}(i)$, the average of $lnC^{m}(r)$ is expressed as: 
\begin{equation}
\Phi^{m}(r)=\frac{\sum_{i}^{N-m+1}lnC_{i}^{m}(r)}{N-m+1}
\end{equation}
The sample entropy is formalized as:
\begin{equation}
S_{E}(m,r,N)=-ln\frac{\Phi^{m+1}(r)}{\Phi^{m}(r)}
\end{equation} 
To ensure $\Phi^{m+1}(r)$ is defined in any particular $N$-length time series, sample entropy redefines $\Phi^{m}(r)$ as:
\begin{equation}
\Phi^{m}(r)=\frac{\sum_{i}^{N-m}lnC_{i}^{m}(r)}{N-m}
\end{equation}

Suppose  $\tau$ is the scale factor, the consecutive coarse-grained time series $Y^{\tau}$ is constructed in the following two steps:
\begin{itemize}
\item Divide the original time series $\textbf{X}$  into consecutive and non-overlapping windows of length $\tau$;
\item Average the data points inside each window;
\end{itemize} 
Finally we get $Y^{\tau}=\{y_{j}^{\tau}|: 1 \leq j \leq  \lfloor \frac{N}{\tau} \rfloor  \}$ and each element of  $Y^{\tau}$ is defined as:
\begin{equation}
y_{j}^{\tau}=\frac{ \sum_{i=(j-1)\tau+1}^{j\tau}X(i)}{\tau},  1 \leq j \leq  \lfloor \frac{N}{\tau} \rfloor
\end{equation} 
When $\tau=1$, $Y^{\tau}$ degenerates to the original time series $\textbf{X}$. Then MSE of the original time series $\textbf{X}$ is obtained by computing the sample entropy of  $Y^{\tau}$ at all scales.  
However, the conventional MSE is designed for single dimensional analysis. Thus, it doesn't satisfy the property \textit{Integration} of an aging indicator. To this end, we extend MSE to MMSE via several modifications. 
    
\textbf{Modification 1.}  The collected multi-dimensional performance metrics usually have different scales and numerical ranges. For example the CPU utilization  metric stays in the range of 0 $\sim$ 100 percentage while the total memory utilization may vary in the range 1048576KB $\sim$ 4194304 KB. Thus, the distance between two segments may be biased by the performance metrics with large numerical ranges, which further results in MSE bias. To avoid that bias, we normalize all the performance metrics to a unified numerical range,namely $0 \sim 1$. Suppose $X$ is a $Nxp$ data matrix where $p$ is the number of performance metrics, $N$ is the length of the data window and  each column of $X$ denotes the time series of one particular performance metric, then $X$ is normalized in the following way:
\begin{equation}
X_{ji}^{'}=\frac{X_{ji}-min(X_{i})}{max(X_{i})-min(X_{i})}, 1 \leq i \leq p, 1 \leq j \leq N 
\end{equation}   

\textbf{Modification 2.} In MSE algorithm, we quantify the similarity between two segments via  maximum norm \cite{37} of two scalar numbers. A novel quantification approach is necessary when  MSE is extended to MMSE. Each element in the maximal norm pair: $max\{|X(i+k)-X(j+k)|: 1 \leq k \leq m-1 \}$ such as $X(i+k)$ is replaced by a vector $\textbf{X}(i+k)$ where each element represents the observation of  one specific performance metric at time $i+k$. Thus the scalar norm is transformed to the vector norm. The embedded dimension $m$ should also be vectorized when the analysis shifts from single dimension to multiple dimensions. The vectorization brings a nontrivial problem in the calculation procedure of sample entropy that is how to obtain $\phi^{m+1}(r)$.  Assume that the  embedding vector $\textbf{m}=(m_{1},m_{2}, \cdots,m_{p})$ denotes the embedded dimensions for $p$ performance  metrics respectively. A new  embedding vector $\textbf{m}^{+}$ which has one additional dimension compared to $\textbf{m}$ can be obtained in two ways. The first approach comes from the study in \cite{37}. According to the embedding theory mentioned in \cite{38}, $\textbf{m}^{+}$ can be achieved by adding one additional dimension to only one  specific embedded  dimension in $\textbf{m}$, which leads to $p$ different alternatives. $\textbf{m}^{+}$ can be any one of the set  $\{ (m_{1},m_{2},\cdots, m_{k}+1, \cdots,  m_{p}), 1 \leq k \leq p\}$. $\phi^{\textbf{m}^{+}}(r)$ is calculated in a naive way or a rigorous way both of which are depicted in detail in \cite{37}.  The other approach is very simple and intuitional that is adding one additional dimension to every embedded  dimension in $\textbf{m}$. There is only one alternative for $\textbf{m}^{+}$ namely $\{ (m_{1}+1,m_{2}+1,\cdots, m_{k}+1, \cdots,  m_{p}+1), 1 \leq k \leq p\}$ . This simple approach implies that each embedded dimension is identical, which may be a strong constraint. However, compared to the former approach, the latter one has negligible computation overhead and works well in this paper.  The former approach will be discussed in our future work.   

\textbf{Modification 3.} In  MSE algorithm, the threshold $r$ is set as $r=0.15*\sigma$.
 In MMSE algorithm, we need a single number to represent the variance of  the  multi-dimensional  performance data in order to apply it directly in the similarity calculation procedure. Here we employ the total variance denoted by $tr(\textbf{S})$  which is defined as the trace of the covariance $\textbf{S}$ of the normalized multi-dimensional performance data to replace $\sigma$.         

\textbf{Modification 4.} We argue that an ideal aging indicator should be expressed as a single number in order to be readily used in failure detection. The output of the conventional MSE is a vector of entropy values at multiple scales. We need to use a holistic metric to integrate all the entropy values at multiple scales. Thus a composed entropy (CE) is proposed. Let $T$ denote the number of scales and the vector $\textbf{E}=(e_{1}, e_{2}, \cdots, e_{T})$ denote the entropy value at each scale respectively. Then  $CE$ is defined as the Euclidean norm of the entropy vector $\textbf{E}$ :
\begin{equation}
CE=\sqrt[2]{ \sum_{i=1}^{T}e_{i}^2 }
\end{equation}
$CE$ cloud be regarded  as the Euclidean distance between $\textbf{E}$ and a ``zero" entropy vector  which consists of  $0$ entropy values. A ``zero" entropy vector represents an ideal system state meaning that the system runs in a health state without any fluctuations.  Thus the more $\textbf{E}$ deviates from a ``zero" entropy vector, the worse the system performance is. It's worth noting that $CE$ is not the unique metric which can integrate the entropy values at all scales. Other metrics also have the  potential to be the aging indicators. For example, the average of $\textbf{E}$ is another alternative although we observe that it has a consistent result with $CE$.

Through the aforementioned modifications on MSE,  the novel aging indicator MMSE  has satisfied all the three properties: \textit{Monotonicity}, \textit{Stability} and \textit{Integration} proposed in Section \uppercase\expandafter{\romannumeral 2}.C. 
For the sake of clarity, we demonstrate the pseudo code of MMSE algorithm  in Algorithm 1. 
\begin{algorithm}
\caption{MMSE algorithm}
\label{alg1}
\begin{algorithmic}[1]
\REQUIRE $m$:the embedded dimension; $T$:the number of scales; $N$:the length of data window; $\textbf{X}$: a $N$x$p$ data matrix where each $p$ denotes the number of performance metrics and each column $X_{i}, 1 \leq i \leq p$ denotes the time series of one specific performance metric with length $N$.
\ENSURE The aging degree metric $CE$
\STATE $//$ Normalize the original time series into the range [0,1]
\FOR {$j=1;j=N;j++$}
\FOR {$i=1;i=p;i++$}
\STATE  $X_{ji}^{'}=\frac{X_{ji}-min(X_{i})}{max(X_{i})-min(X_{i})}$
\ENDFOR 
\ENDFOR 
\STATE $//$ Preset the similarity threshold $r$
\STATE $S=Cov(X^{'})$  $//$ $Cov$ denotes the matrix covariance 
\STATE $r=tr(S)$ $//$ $tr$ denotes the trace  of a particular matrix 
\FOR {$\tau=1;\tau=T;\tau++$}
\STATE $//$ Coarse-graining procedure
\FOR {$i=1;i=p;i++$}
\FOR {$j=1;j=\lfloor \frac{N}{\tau} \rfloor;j++$}
\STATE $Y_{ji}=\frac{ \sum_{k=(j-1)\tau+1}^{j\tau}X_{ki}^{'}}{\tau}$
\ENDFOR
\ENDFOR
\STATE $E(\tau)=ExtendedSampleEntropy(m,r,Y)$ 
\STATE $//$ The similarity calculation between two 
\STATE $//$ segments has been extended from scalar 
\STATE $//$ to vector in $ExtendedSampleEntropy(\cdot)$
\ENDFOR 
\STATE $//$ Calculate the composed entropy $CE$
\STATE $CE=\sqrt[2]{ \sum_{i=1}^{T}E(i)^2 }$
\end{algorithmic}
\end{algorithm}
\subsection{AOF Detection based upon MMSE}
Based upon the proposed aging indicator MMSE, it's easy to design algorithms to detect AOF in real time. According to the survey \cite{18}, there are three kinds of approaches including \textit{time series analysis,threshold-based} and \textit{machine learning} to detect or predict the occurrence of AOF.
In this paper, we only discuss the time series and threshold-based approaches and leave the machine learning approach in our future work. But before that we need to determine a sliding data window in order to calculate MMSE  in real time.  As mentioned in previous work \cite{61},  $\lfloor \frac{N}{\tau} \rfloor$ should stay in the range $10^{m}$ to $30^{m}$. 
Thus the sliding window heavily depends on the scale factor $\tau$. In previous studies \cite{26,27,37}, they  usually set the scale factor $\tau$ in the range $1\sim20$ leading to a huge data window, say 10000, especially when $\tau=20$. A large sliding window not only increases the computational overhead but also makes detection approaches  insensitive to failure. Thus we constrain the sliding window in an appropriate range, say no more than 1000, by limiting the range of $\tau$. In this paper we set $\tau$ in the range $1\sim10$. So a moderate data window $N=1000$ can cater the basic requirement.\\
\textbf{Threshold based approach.} As a simple and straightforward approach, the threshold based approach is widely used in aging failure detection \cite{22,23}. If the aging indicator exceeds the preset threshold, a failure occurs. However an essential challenge is how to identify an appropriate threshold. 
Identifying the threshold from the empirical observation is a feasible approach. 
This  approach learns  a normal pattern 
when the system runs in the normal state. If the normal pattern is violated, a failure occurs. We call this approach  $FailureThreshold$ ($FT$). Assume that $\textbf{CE}=\{CE(1),CE(2),CE(3),\cdots,CE(n)\}$ represents a series of normal data where each element $CE(t)$ denotes a $CE$ value at time $t$. The failure threshold $ft$ is defined as: $ft=\beta*max(\textbf{CE})$
where $\beta$ is a tunable fluctuation factor which is used to cover the unobserved value escaped from the training data. 
As mentioned above, MMSE increases with the degree of software aging. Thus  a failure occurs only when the new observed $CE$ exceeds $ft$, something like upper boundary test. For the aging indicators which have a downtrend such as \textit{AverageBandwidth}, the $max$ function in (9) will be replaced by $min$, something like lower boundary test. A failure occurs if the new observed $CE$ is lower than $ft$. 

$FT$ can be further extended to be an incremental version named $FT$-$X$ in order to adapt to the ever changing running environment.  $FT$-$X$ learns  $ft$ incrementally from historical data. Once a new $CE(t+1)$ is obtained and the system is assured to stay in the normal state, then we compare $CE(t+1)$ with previously trained $max(\textbf{CE}(t))$. If  $CE(t+1) < max(\textbf{CE}(t))$ then $ft=\beta*max(\textbf{CE})$ else   $ft=\beta*CE(t+1)$. Besides the realtime advantage, $FT$-$X$ needs very little memory space to store the new $CE$ and  previously trained maximum of $\textbf{CE}$. \\ 
\textbf{Time series approach.} Although the threshold based approach is simple and straightforward, identifying the threshold is still a  thorny problem. 
Thus, to bypass the threshold setting dilemma, we need a time series approach which requires no threshold or adjusts a threshold dynamically. To compare with existing approaches, we leverage the extended version of \textit{Shewhart control charts} algorithm proposed in \cite{24} to detect AOF. But one difference exists. 
In \cite{24}, they adopt the deviation $d_{n}$ between the local average $a_{n}$ and the global mean  $\mu_{n}$ to detect aging failures. $d_{n}$ is defined as:
\begin{equation}
d_{n}=\frac{ \sqrt[2]{N^{'}}}{\sigma_{n}}(\mu_{n}-a_{n})
\end{equation}
where $N^{'}$ is used to represent the sliding window on entropy data calculated by MMSE algorithm in order to distinguish it from the sliding window $N$ in MMSE algorithm,  the meaning of other relevant parameters can be found in \cite{24}. They pointed out that $H \ddot{o}lder$ exponent decreased with the degree of  software aging. Therefore they only took into account the scenario of $\mu_{n}>a_{n}$.  In this paper, we prove that MMSE increases with the degree of software aging. Thus we only take into account the scenario of $\mu_{n}<a_{n}$. $d_{n}$ is redefined as:
\begin{equation}
d_{n}=\frac{ \sqrt[2]{N^{'}}}{\sigma_{n}}(a_{n}-\mu_{n})
\end{equation}
%
If $d_{n} > \epsilon$ holds for $p$ consecutive points where $\epsilon$ and $p$ are tunable parameters, a change occurs. We insist that a change is assured when $p=4$ at least in this paper. So $N^{'}$ and $\epsilon$ are the primary factors affecting the detection results.  In \cite{24}, the second change in $H \ddot{o}lder$ exponent implies a system failure. By observing the MMSE variation curves obtained from Helix Server test platform and real-world AntVision system shown in Section \uppercase\expandafter{\romannumeral 6}, we find out that these curves can be roughly divided into three phases: slowly rising phase, fast rising phase and  failure-prone  phase. And when the system steps into the failure-prone  phase,  a failure will come soon. Therefore we also assume that the second change in MMSE data implies a system failure. 

\section{Experimental Evaluation}
We have designed and implemented a proof-of-concept named CHAOS and deployed it a controlled environment. To monitor the common process and operating system related performance metrics such as CPU utilization and context switch, we employ some off-the-shelf tools such as Windows Performance Monitor shipped with Window OS or Hyperic \cite{46}; to monitor other application related metrics such as response time and throughput, we develop several probes from scratch 
The sampling interval in all the monitoring tools is 1 minute. Next, we will demonstrate the details of our experimental methodology and evaluation results in a VoD system, Helix Server and in a real production system, AntVision.
\subsection{Evaluation Methodology}
To make comprehensive evaluations and comparisons from multiple angels, we deploy CHAOS in a VoD test environment. And to evaluate the effectiveness of CHAOS in real world systems, we use CHAOS to detect failures in  AntVision system.

\textbf{VoD system.}
We choose VoD system as our test platform because more and more services involve video and audio data transmission. What's more, the ``aging" phenomenon has been observed in such kinds of applications in our previous work \cite{40,41}.  We leverage  Helix Server \cite{47} as a test platform to evaluate our system due to its open source and wide usage. Helix Server as a mainstream VoD software system is adopted to  transmit video and audio data via RTSP/ RTP protocol. At present, there are very few VoD benchmarks. Hence, 
we develop a client emulator named $HelixClientEmulator$ employing RTSP and RTP protocols from scratch. 
It can generate multiple concurrent clients to access media files on a Helix Server. Our test platform  consists of one server hosting Helix Server, three clients hosting $HelixClientEmulator$ and one Gigabit switch connecting the clients and the server together. 100 rmvb media files with different bit rates are deployed  on the Helix Server machine. 
Each client machine is configured with one Intel dual core 2.66Ghz CPU and 2 GB memory and one Gigabit NIC and runs  64-bit Windows 7 operating system.  The server machine is configured with two 4-core Xeon 2.1 GHZ CPU processors, 16GB memory, a 1TB hard disk and a Gigabit NIC and runs  64-bit Windows server 2003 operating system.

During system running, thousands of performance counters can be monitored. In order to trade off between monitoring effort and information completeness, this paper only monitors some of the parameters at four different levels: Helix Client, OS, Helix Server, and server process via respective probes shown in Figure 6. From Helix Client level, we record the performance metrics such as \textit{Jitter},   \textit{Average Response Time} and etc via the probes embedded in  $HelixClientEmulator$; from OS level, we monitor \textit{Network Transmission Rate},  \textit{Total CPU Utilization} and etc via Windows Performance Monitor; from Helix Server level, we monitor the application relevant metrics such as \textit{Average Bandwidth Output Per Player(bps)}, \textit{Players Connected} and etc from the log produced by Helix Server; from process level, we monitor some of metrics related to the Helix Server process like \textit{Process Working Set} via Windows Performance Monitor.Due to the limited space, we will not show the 76  performance metrics. 

\textbf{AntVision System.} 
Besides the evaluations in a controlled environment,  we further apply CHAOS to detect failures in  AntVision system. AntVision is a complex system which is used to monitor and analyze  public opinions and information from social networks like Sina Weibo.  The whole system consists of hundreds of machines in charge of  crawling information, filtering data, storing data and etc. More information about this system can be found in  $www.antvision.net$. With the help of system administrators,  we have obtained a $7$-$day$ runtime log from AntVision. The log data not only contain performance data  
but also failure reports. Although the performance data only involve two metrics i.e. CPU and memory utilization, it's enough to evaluate the failure detection power of  CHAOS. According to the failure reports, we observe that one machine crashed in the 6th day without knowing the reason. After manual investigation, we conclude that the outage is  likely caused by software aging. 

In the controlled environment, we conducted 50 experiments. In each experiment, we guarantee the system runs to ``failure". Here ``failure" not only refers to system crashes but also QoS violations.
In this paper, we leverage  \textit{Average Bandwidth Output Per Player(bps)} (\textit{AverageBandwidth}) as the QoS metric. Once \textit{AverageBandwidth} is lower than a preset threshold e.g. 30bps for a long period,  a ``failure" occurs because a large number of video and audio frames are lost at that moment. To get the ground truth, we manually label the ``failure" point for each experiment. However due to the interference of noise and ambiguity of manual labeling, the failure detection approaches may report failures around  the labeled  ``failure" point rather than at the precise ``failure" point. Thus we determine that the failure is correctly detected if  the failure report falls in the ``decision window". The decision window with a specific length (e.g. 100 in this paper)is defined as a data window whose right boundary is the labeled ``failure" point. 

Four metrics are employed to quantitatively evaluate the effectiveness of CHAOS. They are $Recall$, $Precision$, $F1$-$measure$ and $ATTF$. The former two metrics are defined as:
$$Recall=\frac{N_{tp}}{N_{tp}+N_{fn}}, Precision=\frac{N_{tp}}{N_{tp}+N_{fp}}$$  where $N_{tp}$, $N_{fn}$, and $N_{fp}$ denote the number of true positives, false negatives, and false positives respectively. It's worth noting that  $N_{tp}$, $N_{fn}$, $N_{fp}$ are the aggregated numbers  over  50 experiments respectively. To represent the accuracy in a single value,   $F1$-$measure$ is leveraged and defined as: $$ F1-measure=\frac{2*Recall*Precision}{Recall+Precision}$$  $ATTF$ is defined as the time span between the first failure report and the real failure namely the left boundary of the decision window in this paper. In a real-world system, once a failure is detected the system may be rebooted or offloaded for maintenance. Thus we choose the first failure report as a reference point. If the first failure report falls in the decision window, $ATTF=0$. 
A large $ATTF$ may cause excessive system maintenance leading to availability decrease and operation cost increase. Therefore a lower $ATTF$ is preferred.

\subsection{Performance Metric Selection}
By investigating  all the performance metrics, we find that many metrics have very similar characteristics like trend meaning these metrics are highly correlated. Therefore we select a small subset of metrics which can be used as a surrogate of the full data set without  significant information loss via PCA variable selection presented in Section \uppercase\expandafter{\romannumeral 5}.A. We calculate the best GCD scores of  different variable sets with specific cardinalities (e.g. $k=3$) by the simulated annealing algorithm.  Figure 4 shows the variation of the best GCD sore along with the number of variables.  From this figure, we observe that the GCD score doesn't increase significantly any more when the number of variables reaches 5. Therefore these 5 variables are already capable of representing the full data set. The 5 variables are \textit{Total CPU Utilization}, \textit{AverageBadwidth}, \textit{Process IO Operations Per Second }, \textit{Process Virtual Bytes Peak}, \textit{Jitter} respectively. In the following experiments, we  will  use them to evaluate CHAOS. 
\begin{figure}[!t]
\begin{minipage}[t]{0.5\linewidth}
\centering
\includegraphics[width=1.6in,height=1.2in]{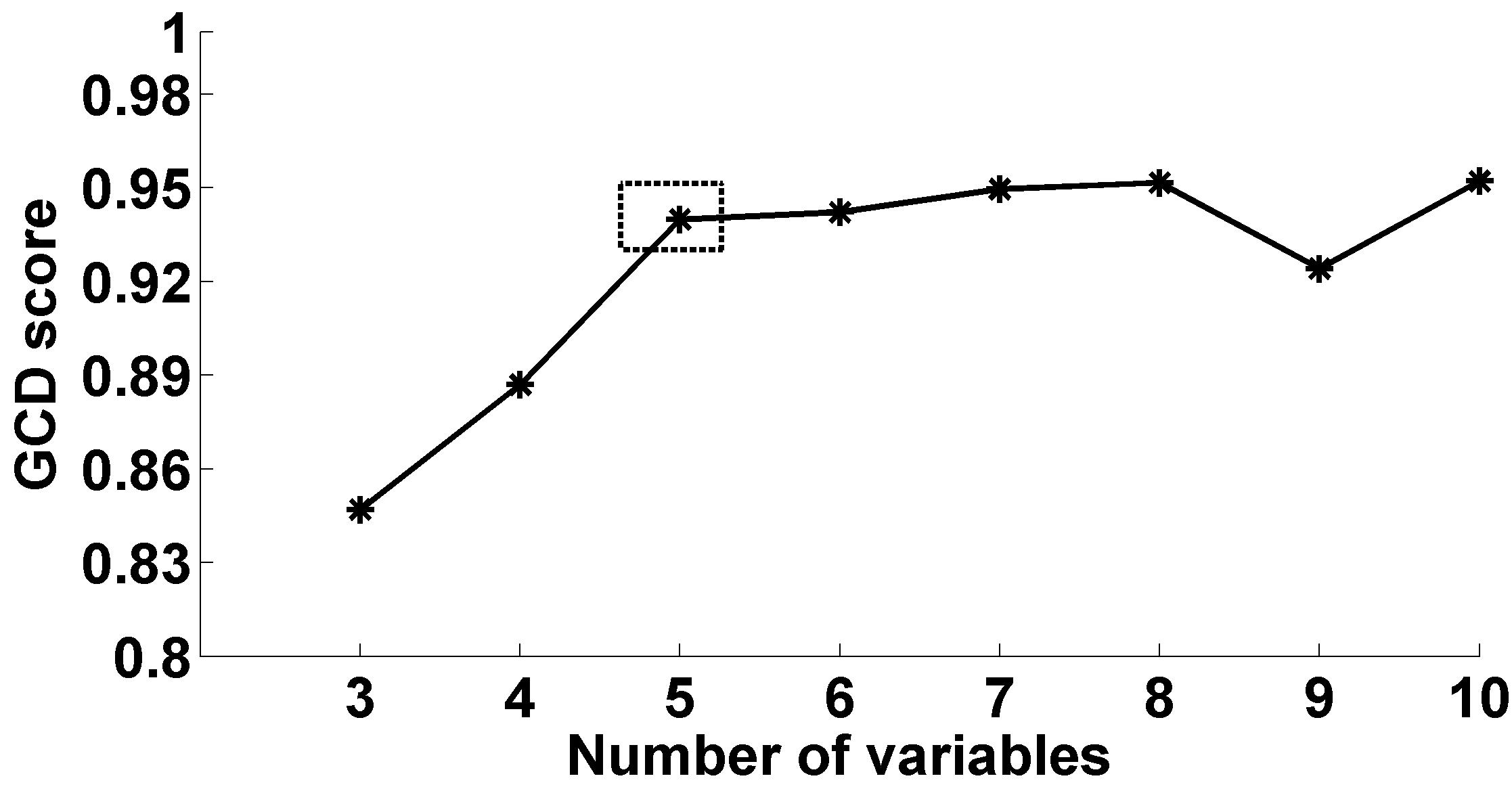}
\caption{The variation of GCD sore along with the number of variables}
\label{Recall}
\end{minipage}%
\begin{minipage}[t]{0.5\linewidth}
\centering
\includegraphics[width=1.6in,height=1.2in]{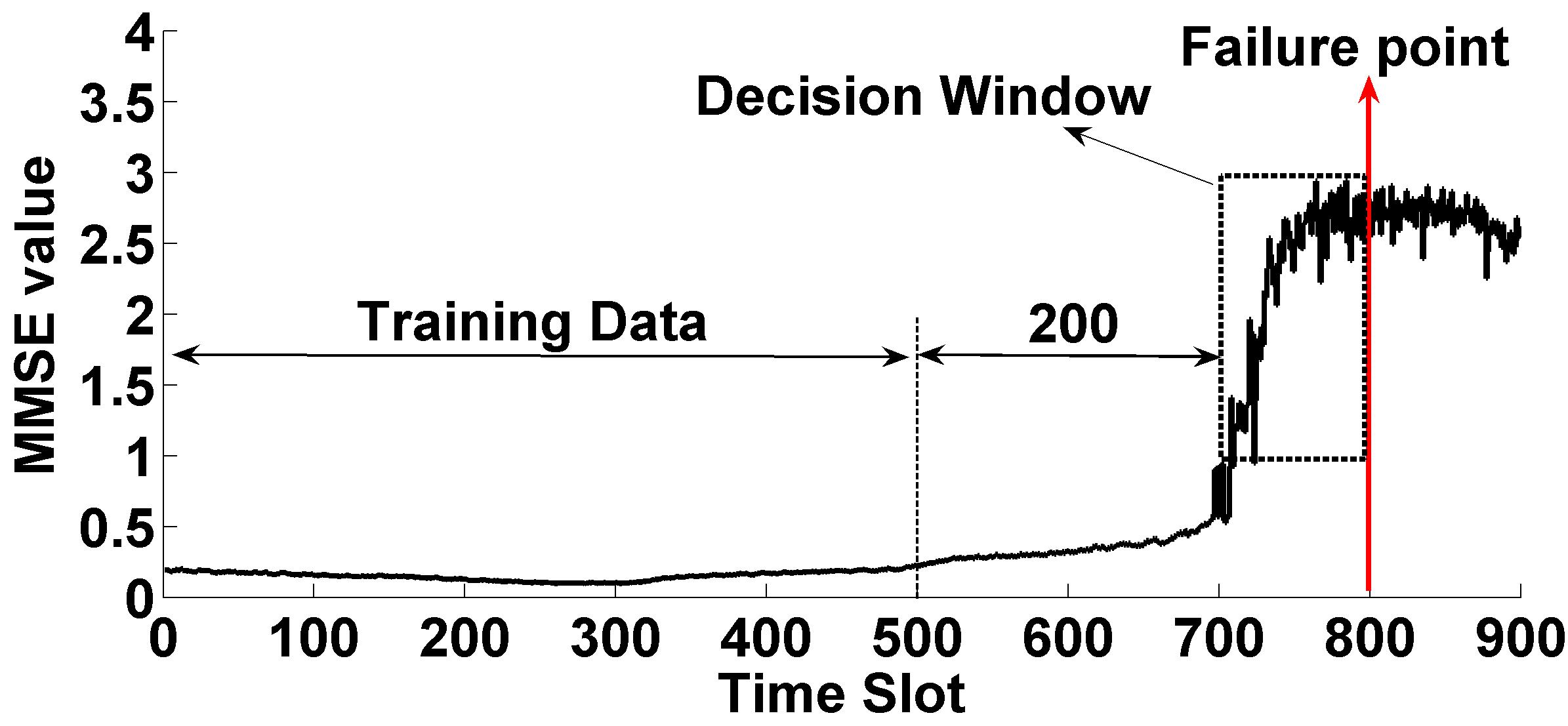}
\caption{Training data selection in $FT$ approach.}
\label{Precision}
\end{minipage}
\end{figure} 
\subsection{AOF Detection}
In this section, we will demonstrate the the failure detection results of CHAOS. In MMSE algorithm, we set the embedded dimension $m=2$, the sliding window $N=1000$,the number of scales $T=10$.  For the failure detection approach $FT$, we need to prepare the  training data and  determine the fluctuation factor $\beta$ first. Due to the lack of prior knowledge, the  training data selection is full of randomness and blindness. To unify the way of training data selection, we leverage the slice of MMSE data ranging from the system start point to the point where 200 time slots away from the right boundary of the decision window as the training data. And leave the left 200 time slots to conduct and compare to $FT$-$X$ approach. Figure 6 shows an example of training data selection in one experiment. In this figure, we set the point in the 800th time slot as the ``failure" point. The decision window spans across the range $700 \sim 800$. Thus the data slice in the range $0 \sim 500$ is selected as the training data.  

Another problem is how to determine $\beta$.  According to the historical performance metrics and failure records, it's possible to achieve an optimal $\beta$. Figure 6 (a) demonstrates the failure detection results of $FT$ with different $\beta$ values. From this figure, we observe that $Recall$ keeps a perfect value 1 when $\beta$ varies in the range $1\sim2$,  i.e. $N_{fn}=0$ 
and the other two metrics: $Precision$ and $F1$-$measure$ increase with $\beta$. From Figure 5, we can find some clues to explain these observations. In Figure 5, the selected training data in the range  $0 \sim 500$ is  much  smaller than the data in the decision window. Hence, no matter how $\beta$ varies in the range $1 \sim  2$, the failure threshold $ft$ is lower than the data in the decision window. The advantage is that all of the failures can be pinpointed (i.e. $N_{fn}=0$). While the disadvantage is that many normal data are mistaken as failures (i.e. $N_{fp}$ is large). And the $Precision$ has an increasing trend due to the decreasing of $N_{fp}$ with $\beta$. Similarly, the detection results  $FT$-$X$ with different $\beta$ values are shown in Figure 6 (b). But quite different from the  observations in Figure 6 (a), the $Precision$ keeps a perfect value 1 (i.e. $N_{fp}=0$) while the other two metrics $Recall$ and $F1$-$measure$ decrease with $\beta$ in Figure 6 (b). Figure 5 is also capable of  explaining these observations. The failure threshold $ft$ is updated  by $FT$-$X$ incrementally according to the system state. As the system runs normally  in the range $500\sim700$, these data are also used to train $ft$. Hence $max(\textbf{CE})$ calculated by $FT$-$X$ is much bigger than the one calculated by $FT$.  A bigger $\beta$ can guarantee the detected failures are the real failures (i.e. $N_{fp}=0$) but may result in a large  failure missing rate (i.e. $N_{fn}$ is large). From these two figures, we observe that $FT$ achieves an optimal result when $\beta$ is large, say $\beta=2$ but $FT$-$X$ achieves an optimal result when $\beta$ is small, say $\beta=1.1$. To carry out fair comparisons, we set $\beta=2$ for $FT$ and $\beta=1.1$ for $FT$-$X$, namely their optimal results. However in real-world applications, the optimal $\beta$ is considerably difficult to attain especially when failure records are scarce. In that case, $\beta$ can be determined by rule-of-thumb. 
\begin{figure}[!t]
\centering
\includegraphics[width=3.3in]{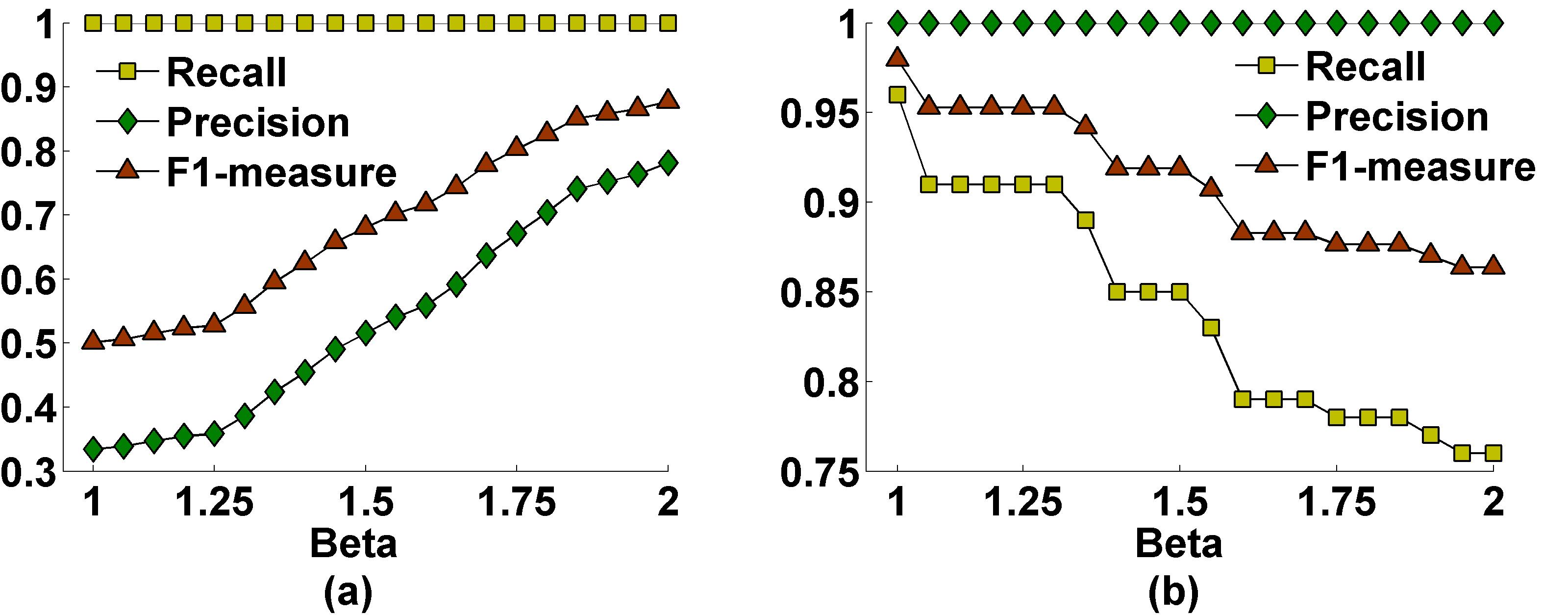}
\caption{The variations of $Recall$,$Precision$ and $F1$-$measure$ along with $\beta$ values. (a) and (b) demonstrate the  variations in $FT$ approach and $FT$-$X$ approach respectively.}
\label{BetaSetting}
\end{figure} 

Although the extended version of \textit{Shewhart control charts} is capable of identifying failures adaptively, it's still necessary to determine two parameters, namely the sliding window $N^{'}$ and $\epsilon$ in order to obtain an optimal detection result. Figure 7$\sim$10 demonstrate the $Recall$, $Precision$,$F1$-$measure$ and $ATTF$ variations along with $\epsilon$ and $N^{'}$ respectively. The variation zone is organized as 10x14 mesh grid. From Figure 7, we observe that in the area where $2 \le N^{'} \le 6$ and $4 \le \epsilon \le 7$, some values are 0 (i.e. $N_{tp}=0$) as there are no deviations exceeding the threshold $\epsilon$. Accordingly,  the $Precision$ and $F1$-$measure$ are 0 too. But in other areas, all the failure points are detected (i.e. $Recall=1$). Thus $F1$-$measure$ changes consistently with $Precision$.  Here we choose the optimal result when $N=6$ and $\epsilon=6.5$ according to $F1$-$measure$. At this point, $Recall=1$,$Precision=0.99$, \textit{F1-measure}=0.995 and $ATTF=6$. 
\begin{figure}[!t]
\begin{minipage}[t]{0.5\linewidth}
\centering
\includegraphics[width=1.6in,height=1.2in]{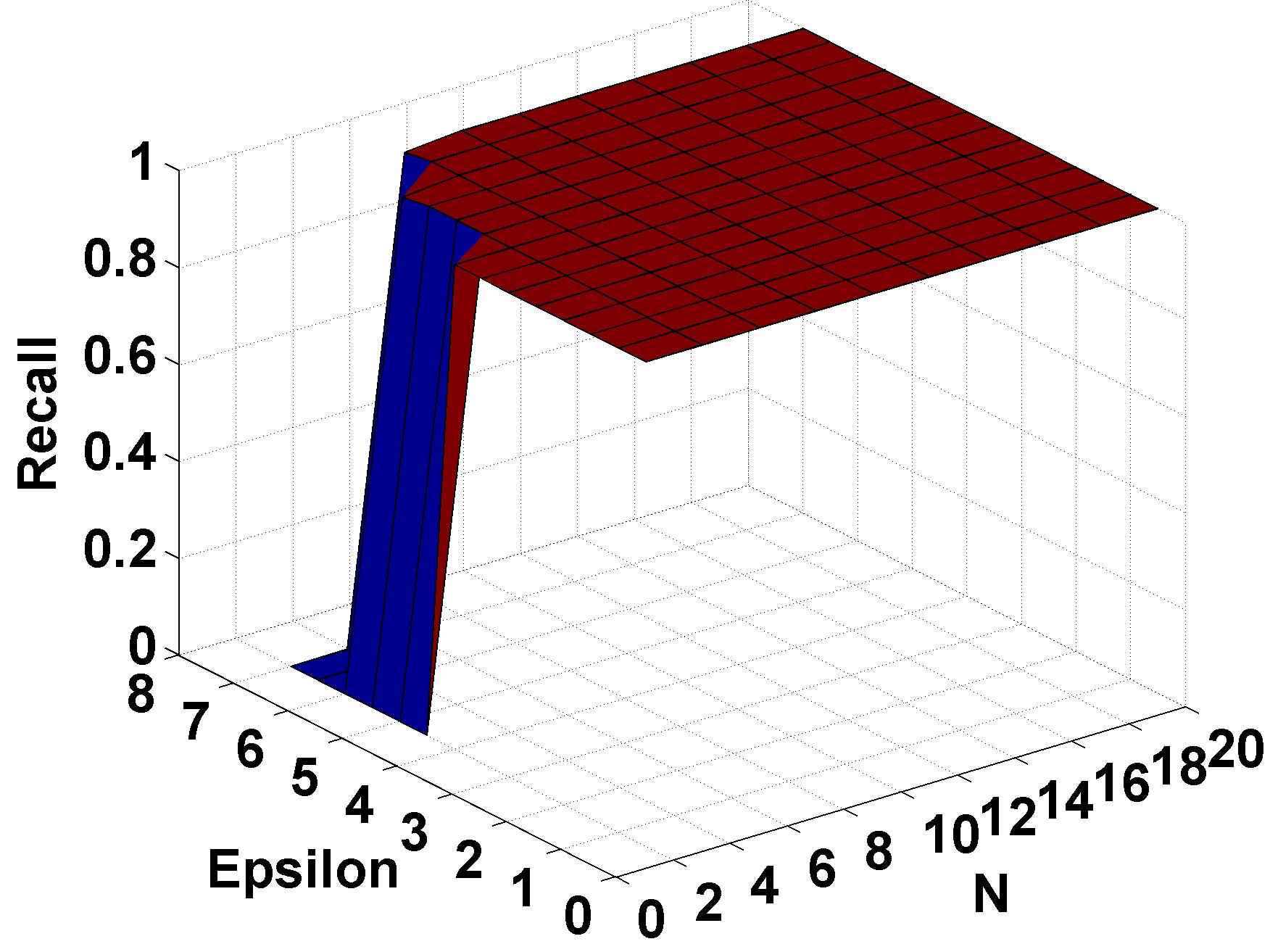}
\caption{$Recall$ variations}
\label{Recall}
\end{minipage}%
\begin{minipage}[t]{0.5\linewidth}
\centering
\includegraphics[width=1.6in,height=1.2in]{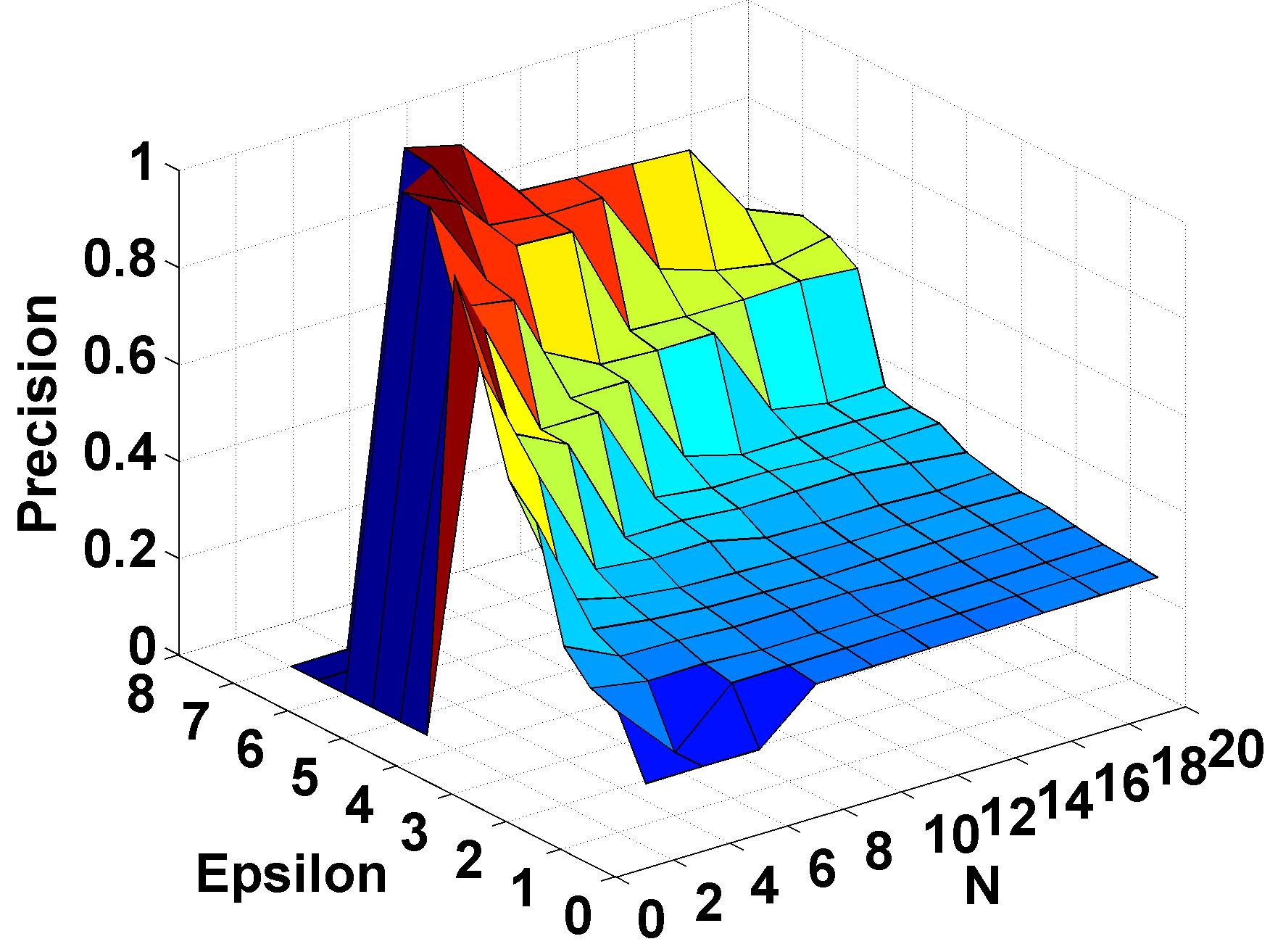}
\caption{$Precision$ variations}
\label{Precision}
\end{minipage}
\end{figure}
\begin{figure}[!t]
\begin{minipage}[t]{0.5\linewidth}
\centering
\includegraphics[width=1.6in,height=1.2in]{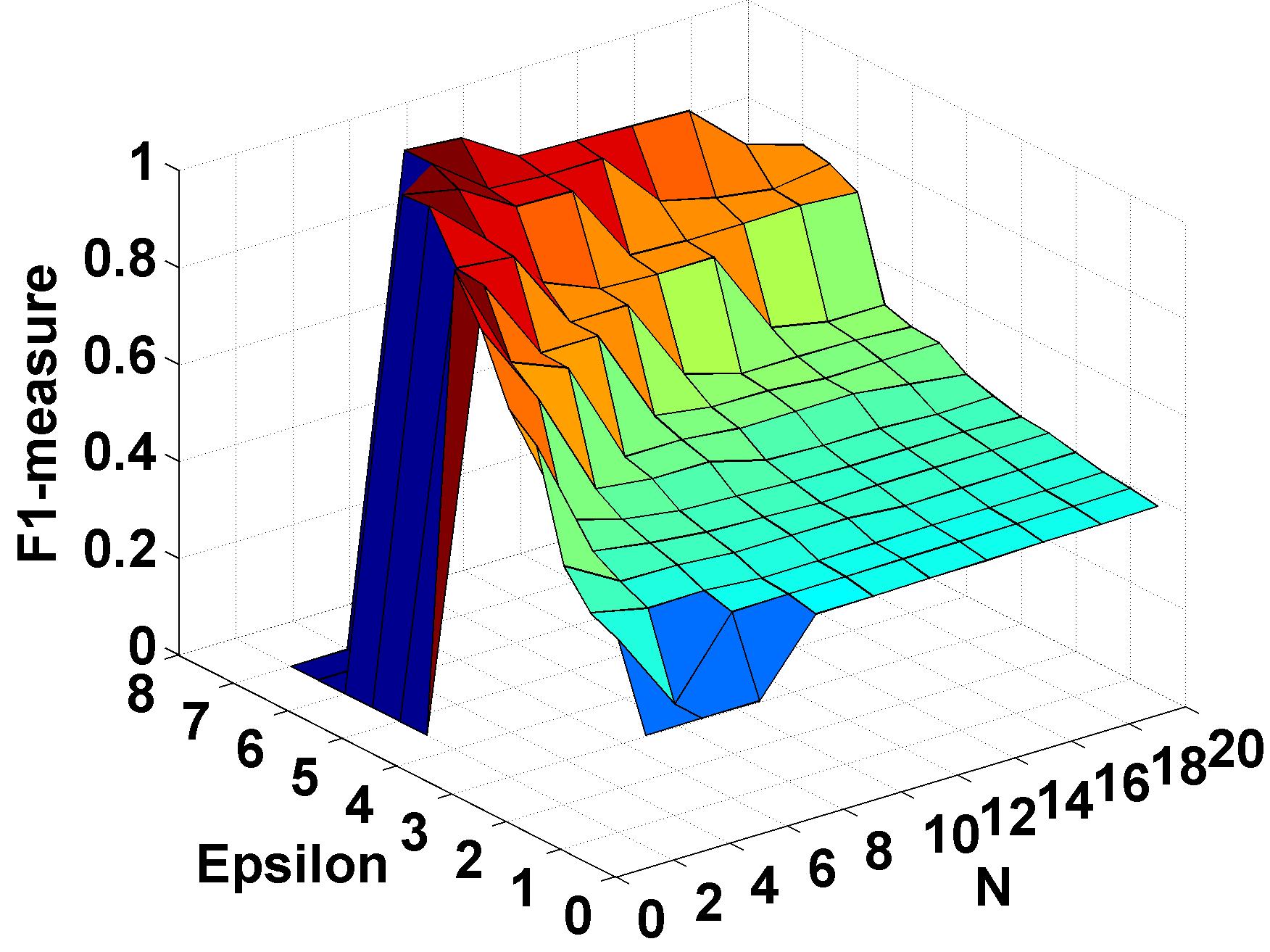}
\caption{$F1$-$measure$ variations}
\label{$F1$-$measure$}
\end{minipage}%
\begin{minipage}[t]{0.5\linewidth}
\centering
\includegraphics[width=1.6in,height=1.2in]{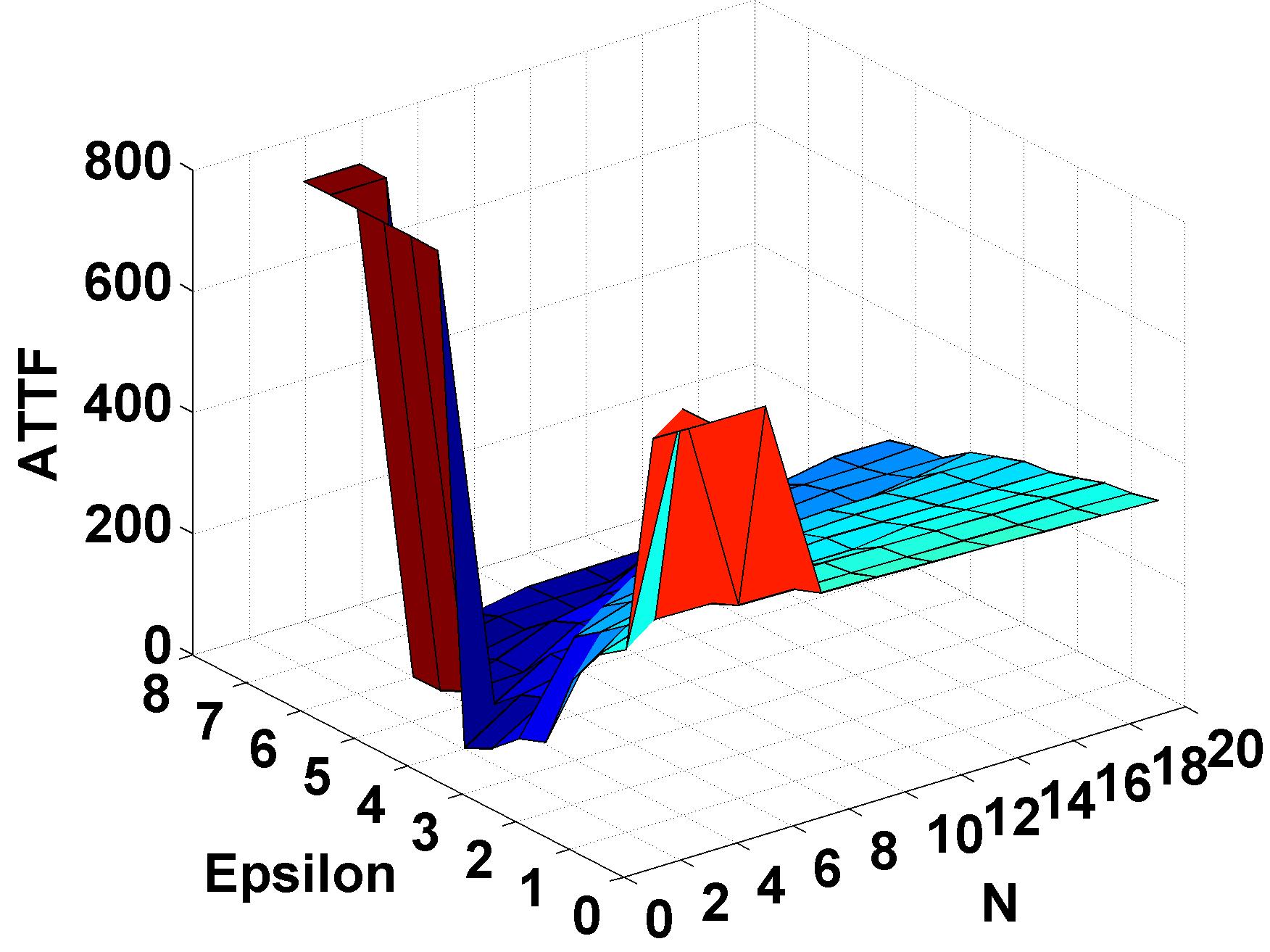}
\caption{$ATTF$ variations}
\label{ATTF}
\end{minipage}
\end{figure}

In the following experiments, we will compare the detection results of $FT$, $FT$-$X$ and the extended version of \textit{Shewhart control charts}  when they achieve the optimal results in the Helix Server system and the real-world AntVision system. In different systems, we will determine the optimal results for different approaches separately. 

Figure 11 depicts the comparisons  of the  failure detection results obtained by $FT$, $FT$-$X$ and the extended \textit{Shewhart control charts} in Helix Sever system. 
From Figure 11.(a), we observe that the extended version of  \textit{Shewhart control chart} achieves the best result, $F1$-$measure$=0.995; $FT$-$X$ achieves the second best result, $F1$-$measure$=0.9795; $FT$ achieves the worst result, $F1$-$measure$=0.8899. The detection results of  the extended \textit{Shewhart control chart} and  $FT$-$X$ have about 0.1 improvement compared to the one of $FT$.  Meanwhile, a lower $ATTF$ is obtained by the adaptive approaches such as  $FT$-$X$, shown in Figure 11.(b). A lower  $ATTF$  not only guarantees the failure could be detected in time but also reduces the excessive maintenance cost. Via these comprehensive comparisons, we find that based upon MMSE, the adaptive approaches outperform the statical approaches due to their adaptation to the ever changing runtime environment. 
\begin{figure}[!t]
\begin{minipage}[t]{0.5\linewidth}
\centering
\includegraphics[width=1.8in,height=1.4in]{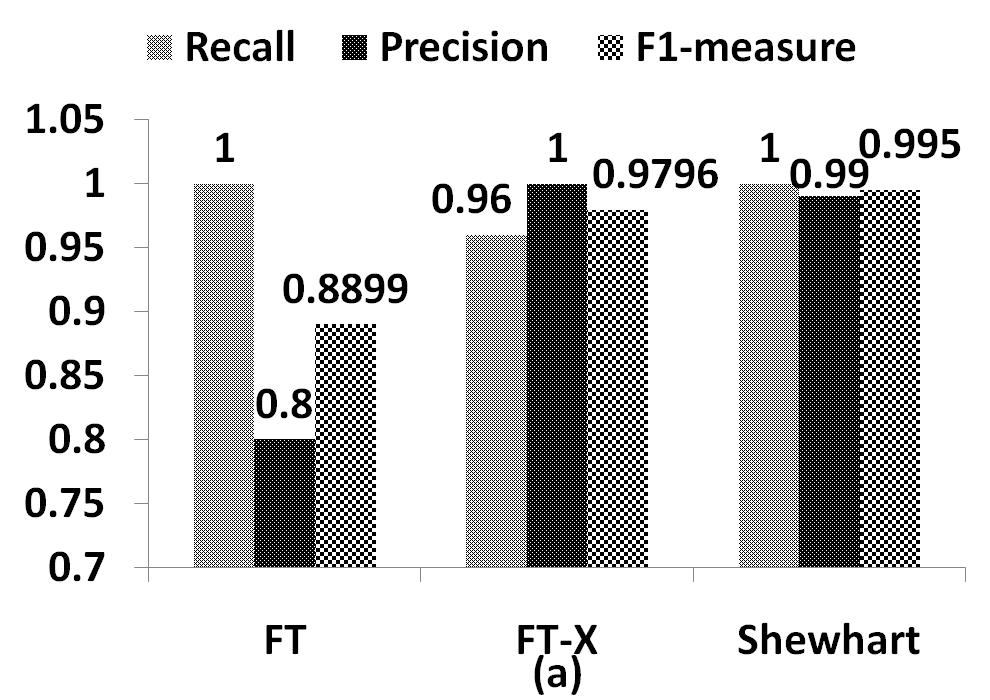}
\label{Recall}
\end{minipage}%
\begin{minipage}[t]{0.5\linewidth}
\centering
\includegraphics[width=1.5in,height=1.4in]{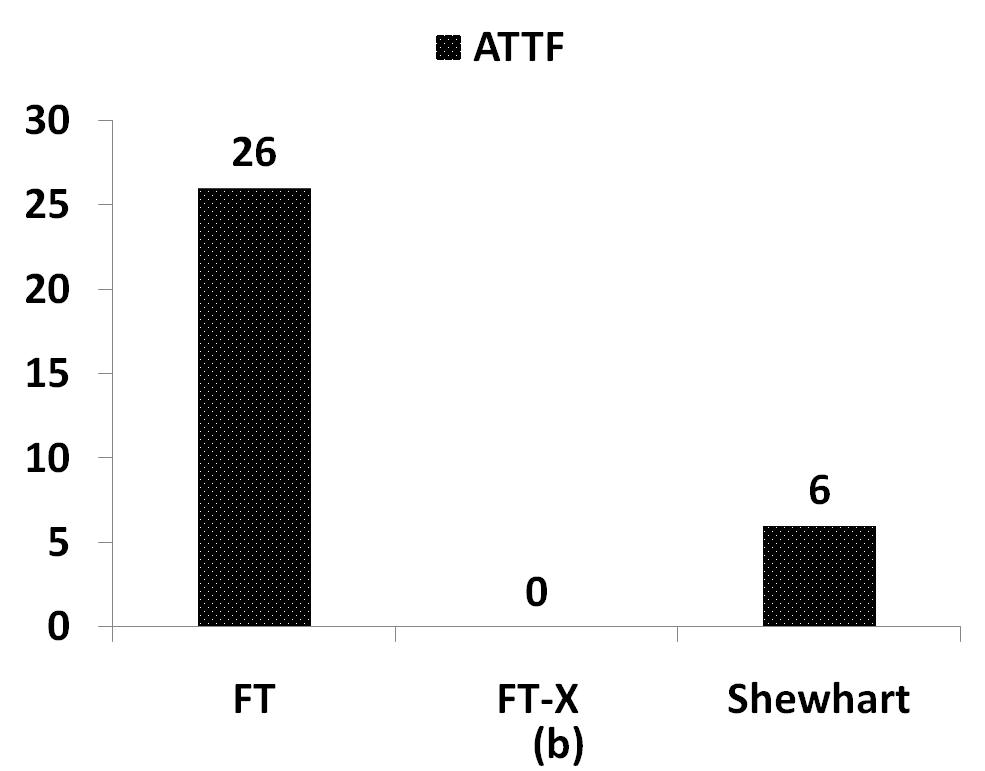}
\label{Precision}
\end{minipage}
\caption{The comparisons  of the  failure detection results obtained by $FT$, $FT$-$X$ and \textit{Shewhart control charts} in Helix Sever system.  (a) presents $Recall$, $Precision$ and $F1$-$measure$ comparisons and (b) presents $ATTF$ comparisons.}
\end{figure}

Figure 12 shows one slice of MMSE time series in the range $1100\sim1320$ calculated by MMSE algorithm on the performance metrics collected in AntVision system and the optimal failure reports generated by $FT$, $FT$-$X$ and  \textit{Shewhart control chart}. The failure reports generated by $FT$, $FT$-$X$ and \textit{Shewhart control chart} fall  in the range $1213\sim1320$, $1217\sim1320$ and $1219\sim1320$ respectively. It is intuitively observed that \textit{Shewhart control chart} approach achieves the best detection result as almost all its failure reports fall in the decision window. However the detection results achieved by $FT$ and $FT$-$X$ are very similar. This is because there are no significant changes for  MMSE in the range $1000 \sim 1220$, which results in the optimal threshold determined by $FT$ and $FT$-$X$ are very similar, namely  0.233 and 0.4 respectively.  Figure 13 demonstrates the comparisons of failure detection results in terms of $Recall$,$Precision$, $F1$-$measure$ and $ATTF$.  The results also tell us that the adaptive approach based upon MMSE indicator is capable of achieving a better detection accuracy and a lower $ATTF$. To make a broad comparison with the approaches based upon other aging indicators, we conduct the following experiments.  
\begin{figure}[!t]
\centering
\includegraphics[width=3.4in,height=1.4in]{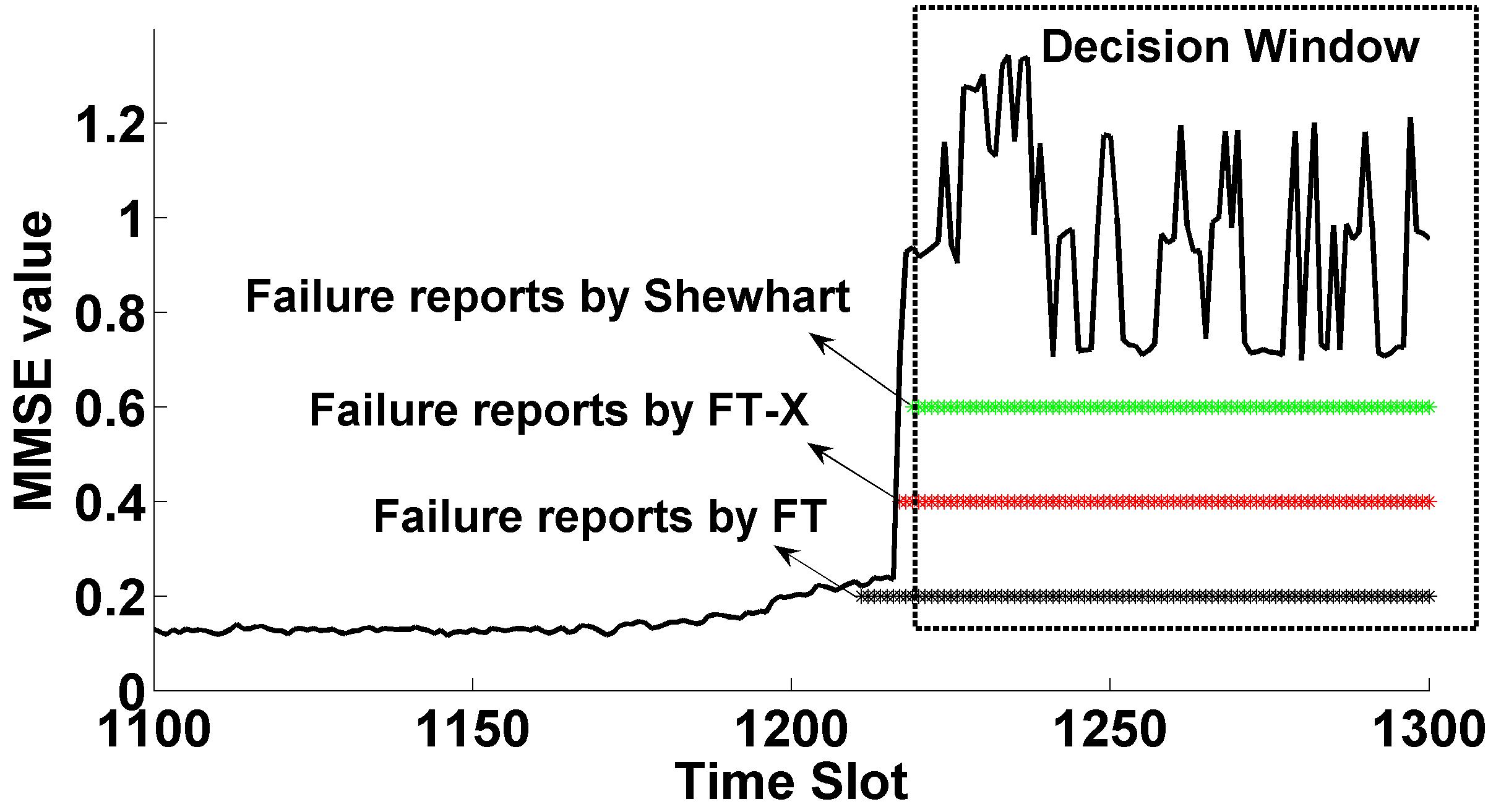} 
\caption{One slice of MMSE data and the failure reports generated by $FT$, $FT$-$X$ and \textit{Shewhart control chart} in AntVision system.}
\label{AntDataDetectionResult}
\end{figure} 
\begin{figure}[!t]
\begin{minipage}[t]{0.5\linewidth}
\centering
\includegraphics[width=1.8in,height=1.4in]{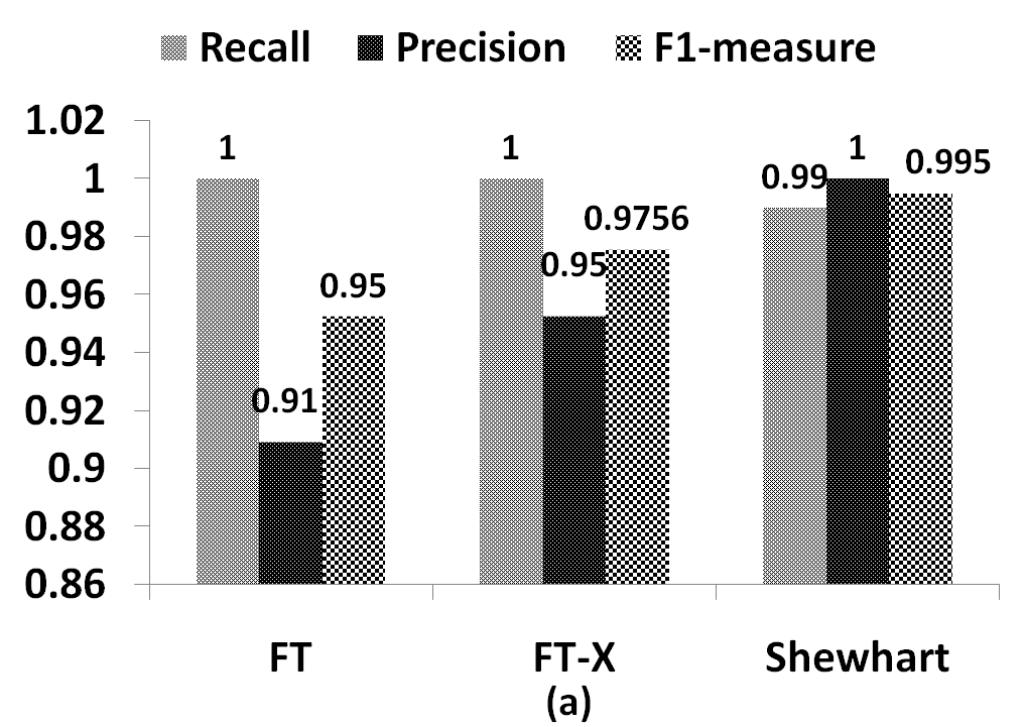}
\label{AntComparison}
\end{minipage}%
\begin{minipage}[t]{0.5\linewidth}
\centering
\includegraphics[width=1.5in,height=1.4in]{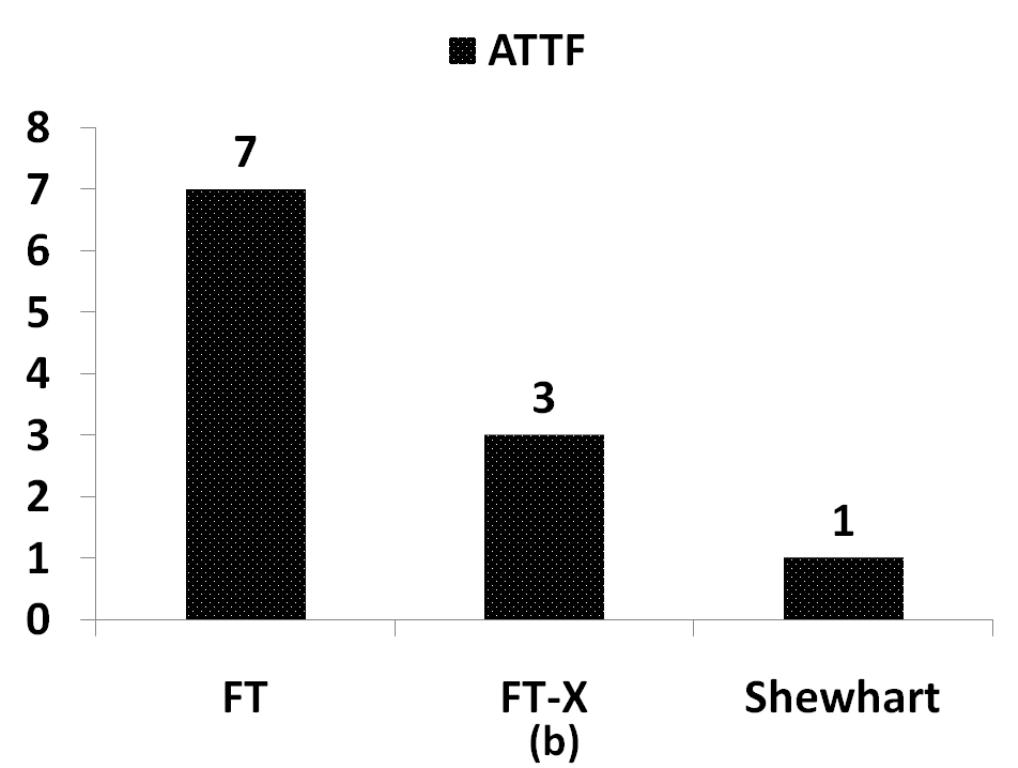}
\label{AntATTFComparison}
\end{minipage}
\caption{The comparisons  of the  failure detection results obtained by $FT$, $FT$-$X$ and \textit{Shewhart control charts} in AntVision system.}  
\end{figure}
%
\subsection{Comparison}
In this section, we will compare the failure detection results obtained by the approaches based upon MMSE and the approaches based upon other explicit or implicit indicators. In previous studies, QoS metrics (e.g. response time, throughout) or runtime performance metrics (e.g. CPU utilization) are more often than not adopted as explicit aging indicators. Accordingly, we adopt \textit{AverageBandwidth} as an explicit aging indicator in Helix Server system and  \textit{CPU utilization} as an explicit aging indicator in AntVision system. $H \ddot{o}lder$ exponent mentioned in \cite{24} is adopted as an implicit aging indicator in these two systems. For different aging indicators, the failure detection approaches vary a little. For \textit{AverageBandwidth}  and  $H \ddot{o}lder$ exponent indicators, we employ a lower boundary test in the threshold based approach and  the extended version of \textit{Shewhart control chart} proposed in \cite{24} in the time series approach both of which are depicted in Section \uppercase\expandafter{\romannumeral 5}.D,  due to their downtrend characteristics. It's worth noting that $\beta$  should vary in the same range e.g. 1$\sim$20 in this paper for $FT$ and $FT$-$X$ in order to conduct fair comparisons.  All of comparisons are conducted in the situations when these failure detection approaches achieve optimal results. 

We first determine the optimal conditions when these approaches achieve their optimal results in Helix Server system. Table \uppercase\expandafter{\romannumeral 1} demonstrates these optimal conditions. Figure 14 shows the comparison results for different indicators in terms of  $Recall$, $Precision$, $F1$-$measure$ and  $ATTF$ respectively. 
\begin{table}[!hbp]
\centering
\caption{The optimal conditions for different approaches based upon different aging indicators in Helix Server system}
\begin{tabular}{c|c|c|c}
\toprule
   & FT &	FT-X  & Shewhart control chart \\
 \midrule
 AverageBandwidth & $\beta=1.8$ &	$\beta=1.8$  &	$N^{'}$=440,$\epsilon=8$ \\
 \hline
 MMSE	 & $\beta=2$ & $\beta=1.1$ & $N^{'}$=4,$\epsilon=6$\\
\hline
 $H \ddot{o}lder$ 	 & $\beta=5.3$ & $\beta=5.3$ & $N^{'}$=40,$\epsilon=5$ \\
\bottomrule  
\end{tabular}
\end{table}

From Figure 14.(a), we observe that the extended version of \textit{Shewhart control chart} approach achieves an ideal recall (i.e. $Recall=1$) no matter which indicator is chosen. However for $FT$ and $FT$-$X$ approaches, the detection result heavily depends on aging indicators.  The $Recall$ of $FT$ and $FT$-$X$ based upon MMSE are 1 and 0.91 respectively, much higher than the results obtained by the approaches based upon \textit{AverageBandwidth}, 0.52 and  $H \ddot{o}lder$, 0.62. The effectiveness  of MMSE  is even more significant than  the other two indicators in term of $Precision$. We observe that the $Precision$ of  failure detection approaches based upon MMSE  is  up to 9 times higher than the one of $FT$ or $FT$-$X$ based upon $H \ddot{o}lder$, and 5 times higher than the one of $FT$ or $FT$-$X$ based upon \textit{AverageBandwidth}, shown in Figure 14.(b). Accordingly, the MMSE is much more powerful to detect AOF than $H \ddot{o}lder$ and \textit{AverageBandwidth} in $F1$-$measure$ demonstrated in Figure 14.(c).  From the point of view of $ATTF$, the approaches based upon MMSE obtain up to 3 orders of magnitude improvement than the ones based upon the other two indicators. For example in Figure 14.(d), for $FT$-$X$  approach,  the $ATTF$  based upon \textit{AverageBandwidth} and  $H \ddot{o}lder$ are 1570 and 1700 respectively, but the  $ATTF$ based upon MMSE is 0. The extraordinary effectiveness of MMSE is attributed to its three properties: \textit{monotonicity}, \textit{stability} and \textit{integration}. However, the single runtime parameter e.g. \textit{AverageBandwidth} can't comprehensively reveal the aging state of the whole system and the fluctuations involved in this indicator result in  much detection bias. Figure 15 shows a representative $AverageBandwidth$  variations from system start to  ``failure''.  We observe that the  \textit{AverageBandwidth} may be low even at normal state. The $H \ddot{o}lder$  exponent indicator also suffers from this problem. Although a downtrend indeed exists in $H \ddot{o}lder$ exponent indicator indicating the complexity is increasing which is compliant with the result in \cite{24}, shown in Figure 16, the  instability hinders to achieve a high accurate failure detection result. From above comparisons, we find out the detection results obtained by $FT$ and $FT$-$X$ based upon $AverageBandwidth$ or  $H \ddot{o}lder$ are the same. That's because the minimum point of the aging indicator is involved simultaneously in the training data of $FT$ and $FT$-$X$ demonstrated in Figure 15 and Figure 16. Therefore the optimal threshold values calculated by $FT$ and $FT$-$X$ are the same.  
\begin{figure}[!t] 
\centering  
\includegraphics[width=3.4in]{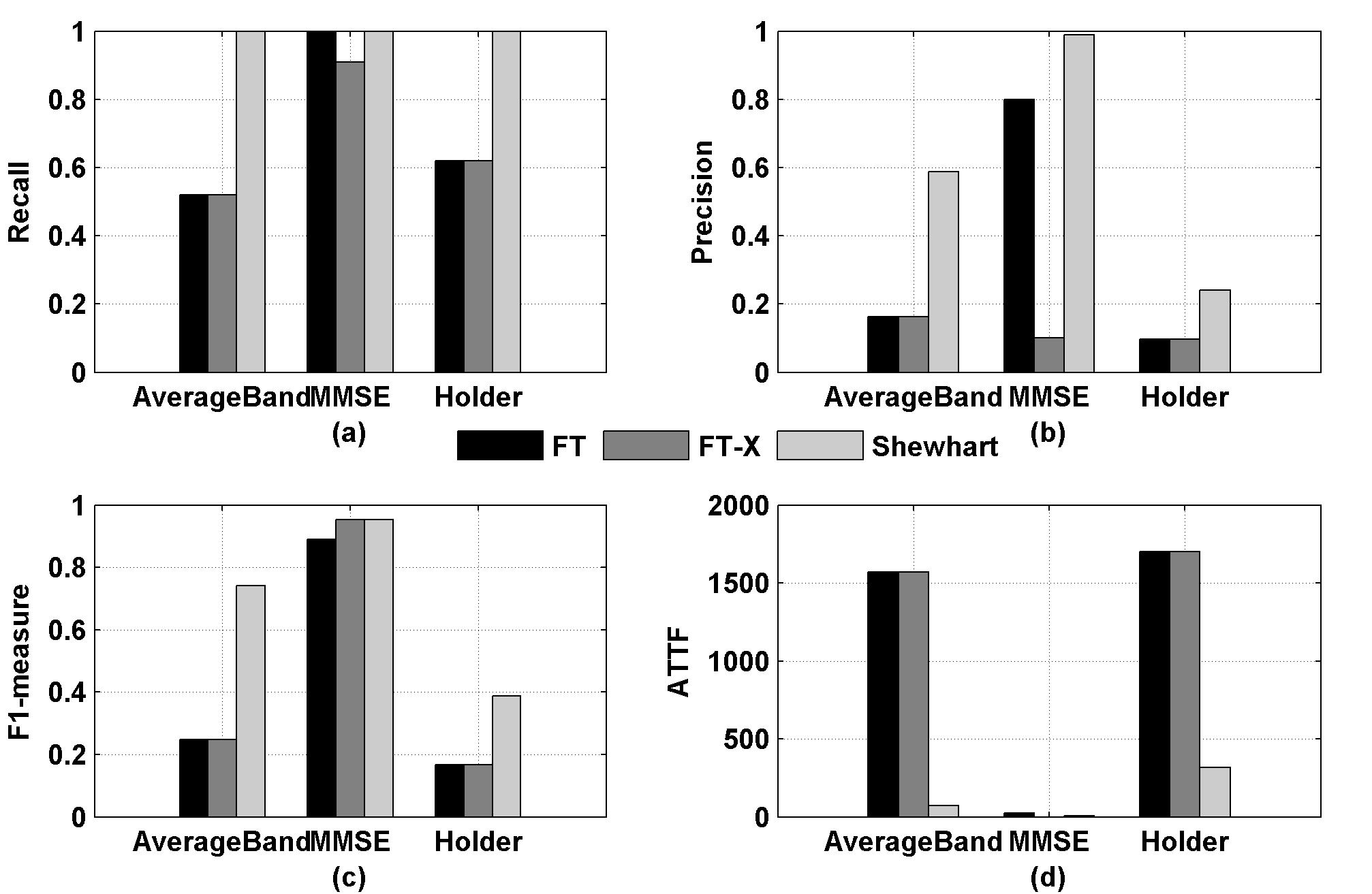} 
\caption{The comparison results of the detection approaches based upon different aging indicators in Helix Server system. Here ``AB" is short for \textit{AverageBandwidth}.}
\label{HelixServerComparison}
\end{figure} 
\begin{figure}[!t]
\centering
\includegraphics[width=3.4in,height=1.6in]{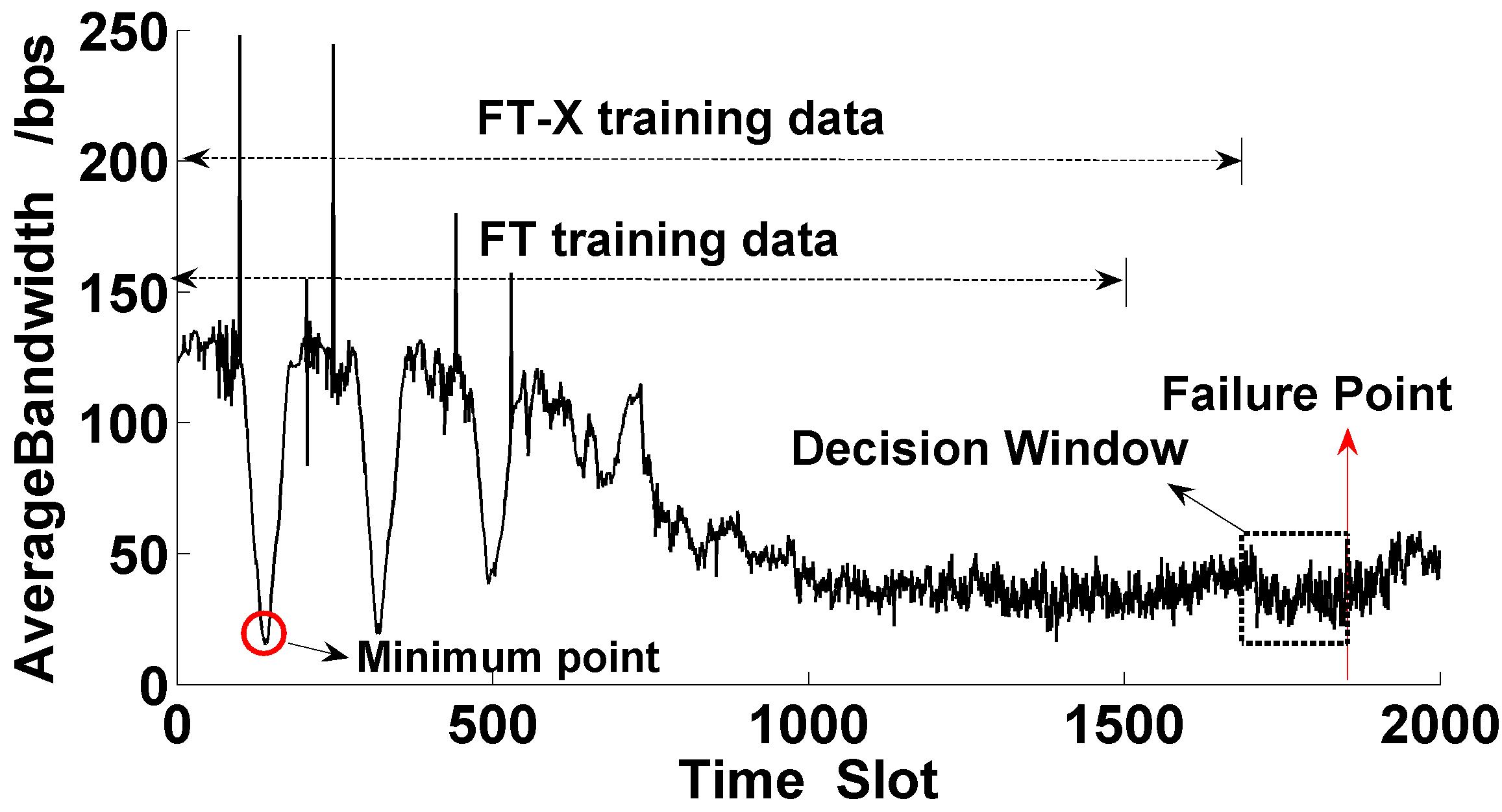} 
\caption{The \textit{AverageBandwidth} data from system start to ``failure".}
\label{ReasonOfNoDifferenceFT}
\end{figure} 
\begin{figure}[!t]
\centering
\includegraphics[width=3.4in,height=1.6in]{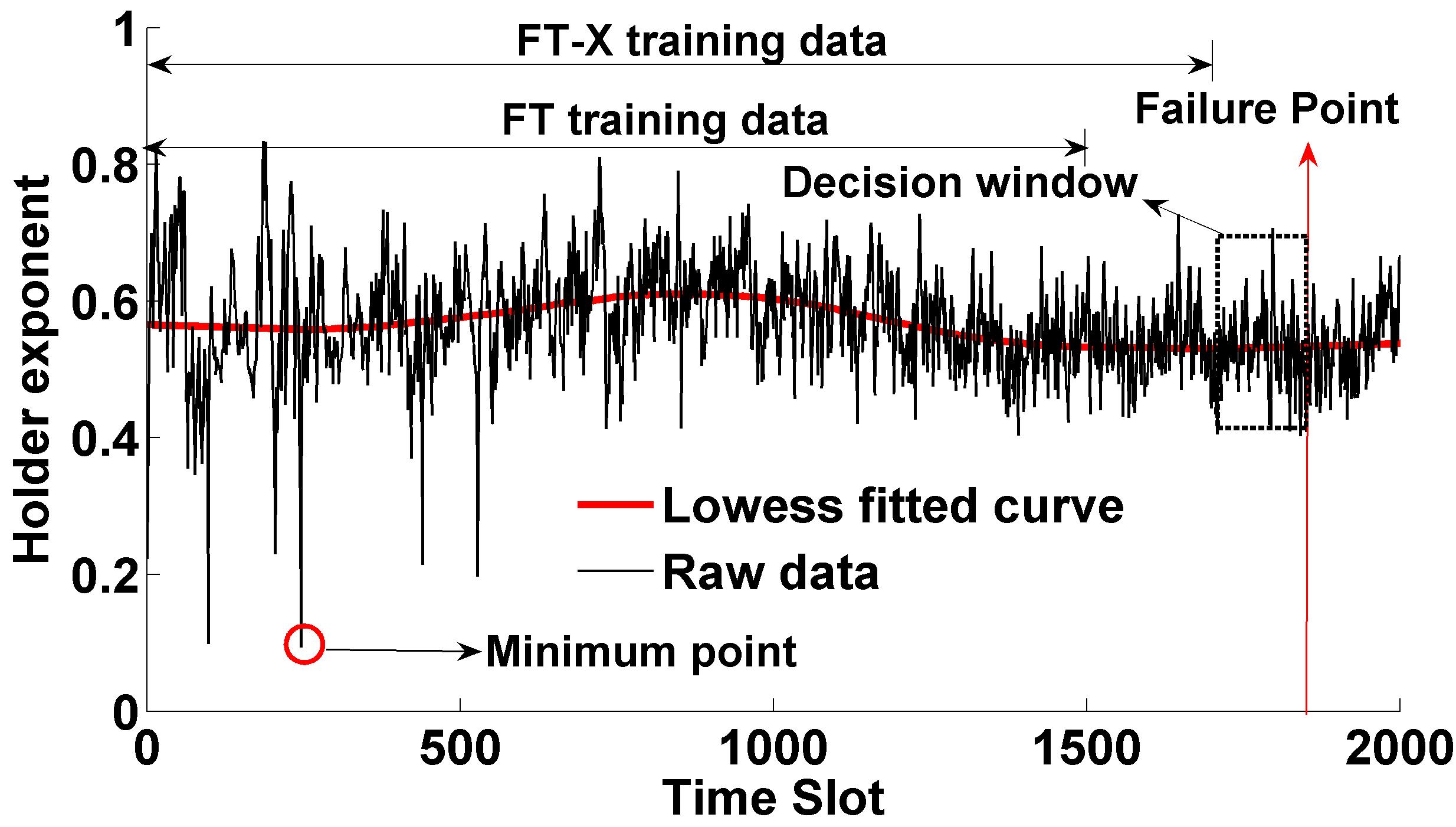} 
\caption{The  $H \ddot{o}lder$  data from system start to  ``failure". The curve fitted by $Lowess$ \cite{40} is used to present the downtrend. }
\label{HelixHolder}
\end{figure} 

The optimal conditions for these failure detection approaches based upon \textit{CPU Utilization}, \textit{MMSE} and $H \ddot{o}lder$ exponent in AntVision system are listed in Table \uppercase\expandafter{\romannumeral 2}.  
\begin{table}[!hbp]
\centering
\caption{The optimal conditions for different approaches based upon different aging indicators in AntVision system}
\begin{tabular}{c|c|c|c}
\toprule
   & FT &	FT-X  & Shewhart control chart \\
 \midrule
 CPU Utilization & $\beta=1$ &	$\beta=-$  &	$N^{'}$=75,$\epsilon=17$ \\
 \hline
 MMSE	 & $\beta=2$ & $\beta=1.3$ & $N^{'}$=8,$\epsilon=7$\\
\hline
 $H \ddot{o}lder$ 	 & $\beta=4.0$ & $\beta=19$ & $N^{'}$=165,$\epsilon=8$ \\
\bottomrule  
\end{tabular}
\end{table}
An interesting finding is that the optimal condition of $FT$-$X$ based upon \textit{CPU Utilization} indicator is $\beta=-$ which means  we can't find an optimal $\beta$ in the range $1\sim20$.  By investigating the detection results, we observe that the $Recall$, $Precision$ and $F1$-$measure$ are all 0 no matter  which value $\beta$ is chosen in the range $1\sim20$. Figure 17 provides the reason why we get this observation. The maximum CPU utilization involved in the training data in $FT$-$X$  falling in  the range $1\sim1200$, exceeds all the CPU Utilization in  the decision window.  Therefore according to the threshold calculated by $FT$-$X$, we can't detect any failures (i.e. $N_{tp}=0$).  While for $FT$ approach, the maximum CPU utilization in the training data is lower than the maximum CPU Utilization in the decision window.  Hence some failure points can be detected by $FT$. This is the reason why $FT$ outperforms $FT$-$X$ based upon \textit{CPU Utilization} in AntVision system. And this could be regarded as a drawback of  non-monotonicity of the \textit{CPU Utilization} indicator. 
\begin{figure}[!t]
\centering  
\includegraphics[width=3.4in]{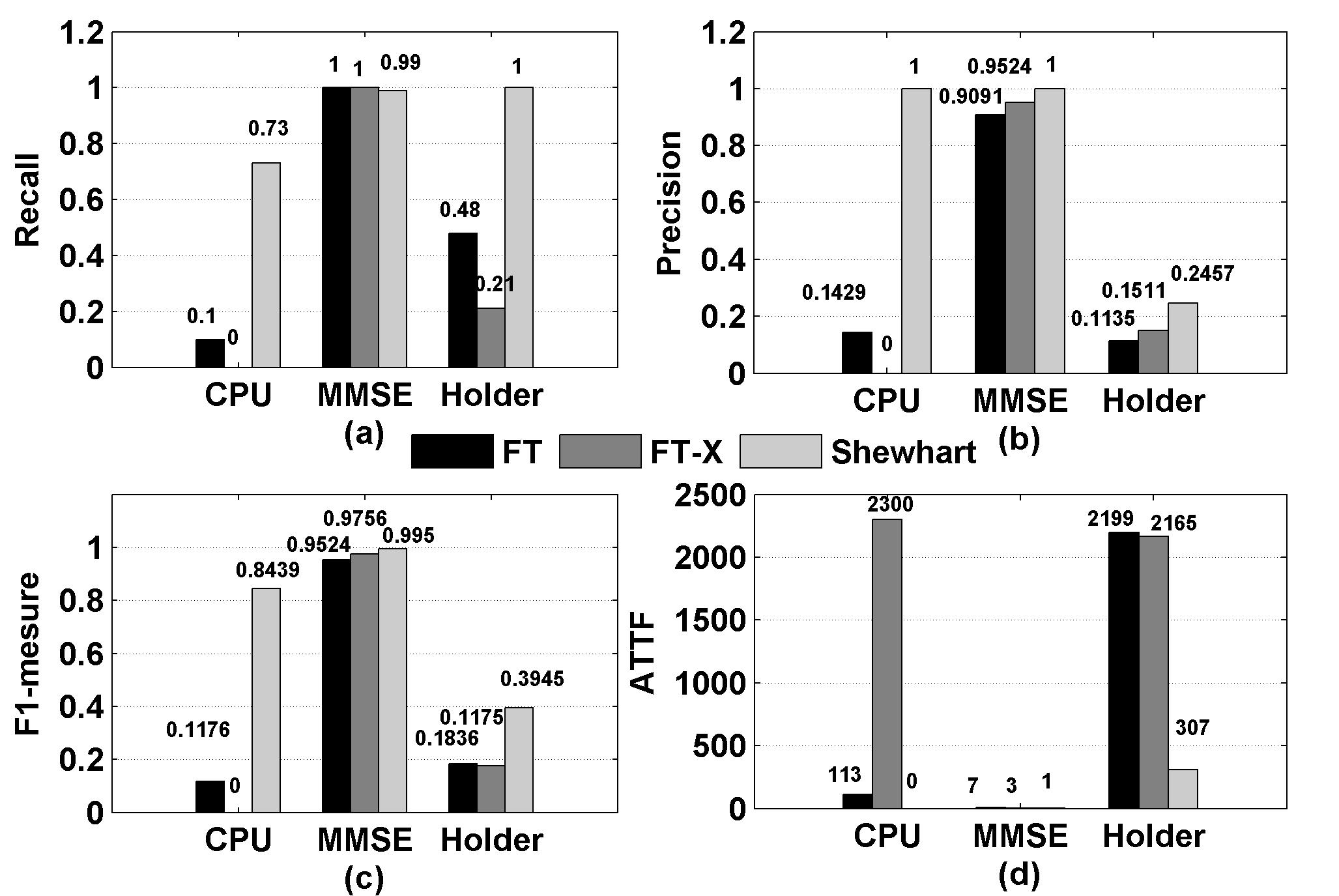} 
\caption{The CPU utilization and corresponding MMSE data in AntVision system.}
\label{AntDataComparisonNew}
\end{figure} 

Figure 18 demonstrates the comparison results in terms of $Recall$,$Precision$,$F1$-$measure$ and $ATTF$ amongst the failure detection approaches based upon different aging indicators in AntVision system. From this figure, we observe that the $F1$-$measure$  achieved by  MMSE-based approaches are higher than 0.95 and much better than the one achieved by \textit{CPU Utilization}-based and $H \ddot{o}lder$ exponent-based approaches. Meanwhile, the $ATTF$ is  significantly reduced from a large number (e.g. 2300) to a very tiny number (e.g. 1) by MMSE-based approaches. We also observe that the extended version of \textit{Shewhart control chart} approach performs better than the other two approaches no matter which indicator is chosen. 

Finally, through comprehensive comparisons above, we conclude that  MMSE-based approaches extraordinarily outperform an explicit indicator (i.e. \textit{CPU Utilization}) based approach and an implicit indicator (i.e. $H \ddot{o}lder$ exponent) based approach. The high  accuracy of MMSE results from its three properties: \textit{Monotonicity},\textit{Stability},\textit{Integration}. And based upon MMSE, the adaptive  detection approaches i.e. the extended version of \textit{Shewhart control chart} performs better. 
\begin{figure}[!t]
\centering  
\includegraphics[width=3.4in]{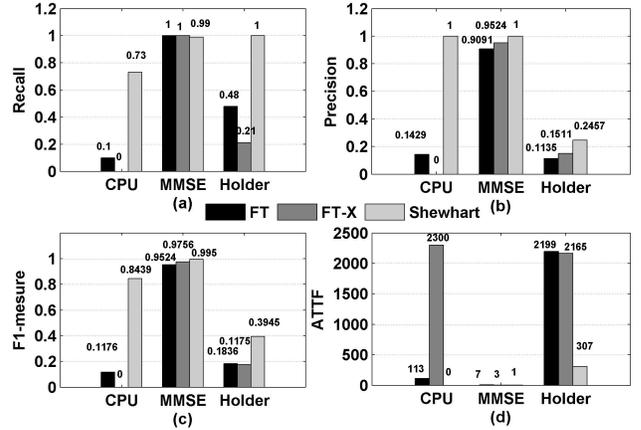} 
\caption{The comparison results of the detection approaches based upon different aging indicators in AntVision system. Here ``CPU" means CPU utilization.}
\label{AntDataComparisonNew}
\end{figure} 
\subsection{Overhead}
The whole analysis procedure of CHAOS except data collection is conducted on a separate machine. Hence it causes very little resource footprint on a test or production system. To evaluate whether CHAOS satisfies the realtime  requirement, we calculate the execution time of the whole procedure. The average execution time of different modules of CHAOS in AntVision system are shown in table \uppercase\expandafter{\romannumeral 3} where MS means Metric selection, MMSE-C means MMSE calculation. Even the most computation-intensive module, namely Metric selection module only consumes 0.875 second and the whole procedure consumes a little more than 1 second. Therefore CHAOS is light-weight enough to satisfy the realtime requirement.
\begin{table}[!hbp]
\centering
\caption{The average execution time of different modules of CHAOS in AntVision system.}
\begin{tabular}{c|c|c|c|c|c}
\toprule
   & MS & MMSE-C  &  FT  & FT-X & Shewhart \\
 \midrule
Time (second) & 0.875 & 0.123	& 0.016  &  0.018  & 0.270  \\
\bottomrule  
\end{tabular}
\end{table}   
\section{Related work}
As the first line of defending software aging, accurate detection of Aging-Oriented Failure is essential. A large quantity of work has been done in this area. Here we briefly discuss related work that has inspired and informed our design, especially work not previously discussed. The related work could be roughly classified into two categories: explicit indicator based method and implicit indicator based method.

\textbf{Explicit indicator based method:} 
The explicit indicator based method usually uses the directly observed performance metrics as the aging indicators and  develops aging detection approaches based upon these indicators. Actually according to our review, most of prior studies such as \cite{0,1,2,3,4,7,8,12,13,14,19,20,50,60} and etc belong to this class. In \cite{0,2,7,19,20}, they treat  system resource usage (e.g. CPU or memory utilization,swap space) as the aging indicator while \cite{3,4,12,13,14,60} take the application specific parameters (e.g. response time, function call) as the aging indicators. Based on these indicators, they detect or predict Aging-Oriented Failure via time-series analysis \cite{0,3,8,12,19,20}, machine learning \cite{4,21,50} or  threshold-based approach \cite{22,23}. The common drawback of these approaches is embodied in the aging indicators' insufficiency due to their weak correlation with software aging.  Hence the detection or prediction results have not reached  a satisfactory level no matter which approaches are adopted. Against this drawback, this paper proposes a new aging indicator,MMSE, which is extracted from the directly observed performance metrics. 

\textbf{Implicit indicator based method:} 
Contrary to the explicit indicator based method, the implicit indicator based method employs aging indicators embedded in the directly observed performance metrics.  These aging indicators are declared to be more sufficient to indicate software aging. Our method falls into this class.  Cassidy, et.al \cite{61}  and Gross, et.al \cite{62} leveraged ``residual" between the actual performance data (e.g. queue length) and the estimated performance data obtained by a multivariate analysis method (e.g. Multivariate State Estimation Technique) as the aging indicator. Then the software's fault detection procedure used a Sequential Probability Ratio Test (SPRT) technique to determine whether the residual value is out of bound. 
Mark, et.al \cite{24} proposed another implicit aging indicator: $H\ddot{o}lder$ exponent. They showed that the $H\ddot{o}lder$ exponent of memory utilization decreased with the degree of software aging. By identifying the second breakdown of $H\ddot{o}lder$ exponent data series through an online Shewhart algorithm, the Aging-Oriented Failure was detected. Although Jia \cite{14} didn't introduce any implicit aging indicator, he showed software aging process was nonlinear and chaotic. Hence, some complexity-related metrics such as entropy, Lyapunov exponent and etc are possible to be aging indicators. And our work is inspired by Mark , et.al \cite{24} and Jia, et.al \cite{14}. However, the prior studies had no quantitative proof about the viability of their implicit aging indicators, no abstraction of the properties that an ideal aging indicator should have and no multi-scale extension. Moreover the effectiveness of $H\ddot{o}lder$ exponent was only evaluated under emulated increasing workload and a thorough evaluation under real workload was absent in the their paper. These defects will result in bias in the detection results, which is shown in the real experiments in section \uppercase\expandafter{\romannumeral 6}.                

Another implicit indicator is MSE, although it hasn't been employed in software aging analysis before this work. However MSE has been widely used to measure the irregularity variation of pathological data such as electrocardiogram data \cite{26,27,37,44}. Motivated by these studies, we first introduce MSE to software aging area. However, we argue that software aging is a complex procedure affected by many factors. Hence,to accurate measure software aging, a multi-dimensional approach is necessary. We extend the conventional MSE to  MMSE via several modifications.  Wang, et.al \cite{43} also adopts entropy as an indicator of performance anomaly. But he measures the entropy using the traditional Shannon entropy rather than MSE. 
\section{Conclusion}
In this paper, we proposed a novel implicit aging indicator namely MMSE which leverages the complexity embedded in runtime performance metrics to indicate software aging. Through theoretical proof and experimental practice, we demonstrate that entropy increases with the degree of software aging monotonously. To counteract the system fluctuations and comprehensively describe software aging process, MMSE integrates the entropy values extracted from multi-dimensional performance metrics at multiple scales. Therefore, MMSE  satisfies the three properties, namely \textit{Monotonicity}, \textit{Stability},  and \textit{Integration} which we conjecture an ideal aging indicator should have.  
Based upon MMSE, we design and develop a proof-of-concept named CHAOS which contains three failure detection approaches, namely $FT$ and $FT$-$X$ and the extended version of \textit{Shewhart control chart}. The experimental evaluation results in a VoD system  and in a real-world production system, AntVision, show that  CHAOS can achieve extraordinarily high accuracy and near 0 $ATTF$. Due to the \textit{Monotonicity} of MMSE, the adaptive approaches such as $FT$-$X$  outperform the static approach such as $FT$ while this is not true for other aging indicators. Compared to previous approaches, the accuracy of failure detection approaches based upon MMSE is increased by  up to 5  times, and the $ATTF$ is reduced by 3 orders of magnitude. In addition,  CHAOS is light-weight enough to satisfy the realtime requirement. We believe  that CHAOS is an indispensable complement to conventional failure detection approaches.

\appendices
\section{Proof of Entropy Increase}
Our proof is based on three basic assumptions:  \\
\textbf{Assumption 1:} The software systems or components  only exhibit binary states during running: working state $s_{w}$ and failure state $s_{f}$.  \\
\textbf{Assumption 2:} The probability of  $s_{f}$   increases monotonously with  the degree of software aging. \\
\textbf{Assumption 3:} If the probability of  $s_{w}$  is less than the probability of $s_{f}$ , the system will be rejuvenated at once. 

A system or a component may exhibit more than two states during running, but here we only consider two states: working  and failure state, which is compliant with the classical three states  i.e. up ,down  and rejuvenation  mentioned in \cite{6,54,55} without considering rejuvenation state.  
According to the description of software aging stated in the introduction section, the failure rate increases with the degree of software aging. Thus \textbf{Assumption 2} is intuitional. Actually increasing failure probability is also a common assumption in previous studies \cite{45,51,52,53,54,55} in order to obtain an optimal rejuvenation scheduling. For a software system, it's unacceptable if only a half or even less of the total requests are processed successfully especially in modern service oriented systems. A software system  is forced to restart before it enters into a non-service state. Therefore \textbf{Assumption 3} is reasonable. 

If the software system is represented  as a single component, the system entropy at time $t$ is defined as:
\begin{equation}
    E(t)=-(p_{w}(t)*ln(p_{w}(t))+p_{f}(t)*ln(p_{f}(t)))
\end{equation}
where $p_{w}(t)$ and $p_{f}(t)$ represent the probability of normal state $s_{w}$ and failure state $s_{f}$ at time $t$ respectively and $p_{w}(t)+p_{f}(t)=1$. At the initial stage, namely $t=0, p_{w}(0)=1$, we say the system is completely new. At this moment, the entropy $E(t)$ equals 0. As software performance degradation, $p_{w}(t)$ decreases from 1 to 0 while $p_{f}(t)$ increases from 0 to 1. We assume the failure rate $h(t)$  conforms to a Weibull distribution with two parameters which is commonly used in previous studies \cite{48,51,54,55}. The distribution is described as:
\begin{equation}  
h(t)=\frac{\beta }{\alpha }({\frac{t}{\alpha}})^{\beta-1}{e}^{-({\frac{t}{\alpha}})^{\beta}}
\end{equation}
where $\beta$ denotes the shape parameter and  $\alpha$ denotes the scale parameter. Because 
\begin{equation} 
h(t)=\frac{dF(t)/dt}{1-F(t)}=\frac{p_{f}(t)}{1-F(t)} 
\end{equation}
where $F(t)$ denotes the cumulative distribution function (CDF) of $p_{f}(t)$. And 
\begin{equation}
F(t)=1-{e}^{\int_{0}^{t}h(t)dt}=1-e^{-(\frac{t}{\alpha})^{\beta}}
\end{equation}
Therefore $p_{f}(t)$ could be expressed as: 
\begin{equation}
p_{f}(t)=\frac{\beta }{\alpha }({\frac{t}{\alpha}})^{\beta-1}{e}^{-2({\frac{t}{\alpha}})^{\beta}}
\end{equation}

In \cite{54}, they determined $\alpha$ and $\beta$ via parameter estimation and gave a confidence range for $\alpha$ and $\beta$ respectively. Based upon their result, we set  $\alpha=5.4E5$ and $\beta=11$ in this paper. The failure probability, $p_{f}(t)$,  from time 0 to time  $4.5E5$ (system crash assumed) is depicted in Figure 19. Accordingly  the  entropy, $E(t)$, is demonstrated in Figure 20. From Figure 20, we observe that entropy increases monotonously during the life time of the running system. In this case, the failure probability  curve is truncated at system crash, far from the point where $p_{f}(t)=p_{w}(t)$. In some corner cases,  $p_{f}(t)$ can reach the point where $p_{f}(t)=p_{w}(t)$. However, the system  suffers from SLA violations and  restarts very soon when $p_{f}(t) > p_{w}(t)$. Thus we only take into account the scenario when $p_{f}(t) < p_{w}(t)$. In this scenario, the system entropy increases monotonously. Therefore Theorem 1 is true as long as $p_{f}(t)$ or $p_{w}(t)$ varies monotonously. 
\begin{figure}[!t]
\begin{minipage}[t]{0.5\linewidth}
\centering
\includegraphics[width=1.6in,height=1.2in]{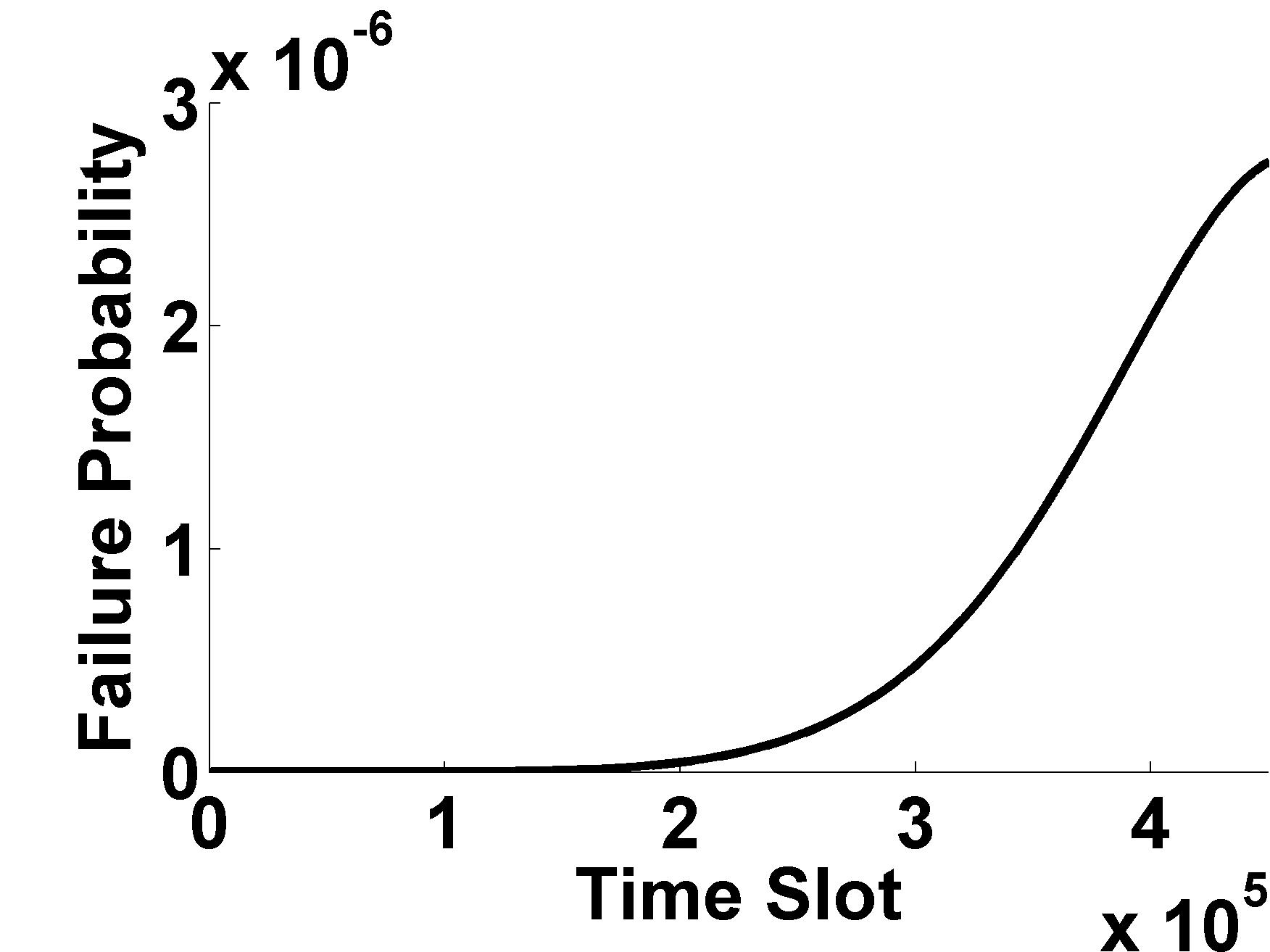}
\caption{$p_{f}(t)$ variation curve}
\label{Recall}
\end{minipage}%
\begin{minipage}[t]{0.5\linewidth}
\centering
\includegraphics[width=1.6in,height=1.2in]{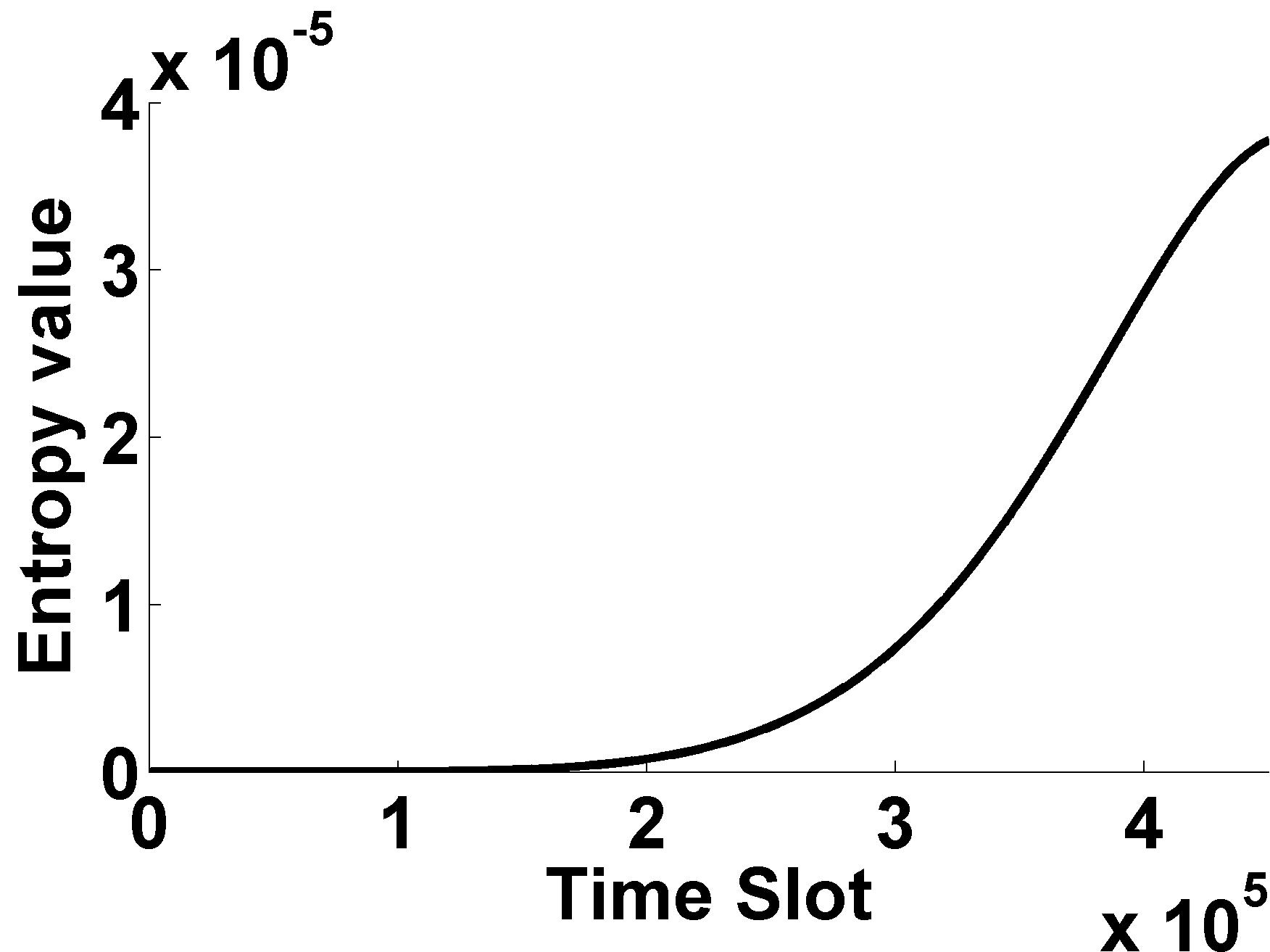}
\caption{$E(t)$ variation curve}
\label{Precision}
\end{minipage}
\end{figure}
 
\begin{Mytheo}
 If  $p_{f}(t)$ increases monotonously, the system entropy $E(t)$ monotonously increases with the degree of software aging when $p_{f}(t) < p_{w}(t)$ or $ p_{f}(t) < \frac{1}{2}$   
\end{Mytheo}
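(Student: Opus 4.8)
The plan is to reduce the statement to an elementary monotonicity property of the binary entropy function. First I would observe that the two conditions appearing in the hypothesis are equivalent: since $p_{w}(t)+p_{f}(t)=1$, the inequality $p_{f}(t)<p_{w}(t)$ holds if and only if $p_{f}(t)<1/2$. Hence it suffices to analyze the regime $p_{f}(t)\in(0,1/2)$.

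Next, substituting $p_{w}(t)=1-p_{f}(t)$ into the defining formula $E(t)=-\,(p_{w}(t)\ln p_{w}(t)+p_{f}(t)\ln p_{f}(t))$ and writing $p:=p_{f}(t)$, the system entropy becomes $E=H(p)$ with $H(p)=-p\ln p-(1-p)\ln(1-p)$, the Shannon entropy of a two-state distribution. The one technical step is to show that $H$ is strictly increasing on $(0,1/2)$. Differentiating gives $H'(p)=\ln\!\left(\frac{1-p}{p}\right)$, and for $p\in(0,1/2)$ we have $1-p>p>0$, so $\frac{1-p}{p}>1$ and therefore $H'(p)>0$; since $H$ is continuous at $0$ with $H(0)=0$, the monotonicity extends to the half-open interval $[0,1/2)$.

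Finally I would combine these facts with the hypotheses. By Assumption~2 (equivalently, by the hypothesis of the theorem) the failure probability $p_{f}$ is a monotonically nondecreasing function of the degree of software aging, and by Assumption~3 it remains in $[0,1/2)$ over the regime under consideration. Since $E=H\circ p_{f}$ is then a composition of two monotonically nondecreasing maps on this regime, $E$ is itself monotonically nondecreasing in the aging degree, and strictly increasing wherever $p_{f}$ is strictly increasing (for instance for the Weibull-based $p_{f}(t)$ used in the preceding discussion, which is strictly increasing on the relevant initial interval). Equivalently, reading everything as a function of time, $\frac{dE}{dt}=H'(p_{f}(t))\,p_{f}'(t)\ge 0$, since both factors are nonnegative.

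There is no deep obstacle here; the entire content is the computation of the sign of $H'$ together with the remark $p_{f}<p_{w}\Leftrightarrow p_{f}<1/2$. The only points needing a little care are making precise what ``increases with the degree of software aging'' means --- namely that it is exactly Assumption~2, so that the composition $H\circ p_{f}$ is legitimate and inherits monotonicity --- and noting that the conclusion is \emph{strict} monotonicity only under strict monotonicity of $p_{f}$; under a merely nondecreasing $p_{f}$ one obtains only a nondecreasing $E$, which is still consistent with the theorem's wording.
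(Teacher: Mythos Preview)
Your proposal is correct and follows essentially the same approach as the paper: both substitute $p_{w}=1-p_{f}$, reduce to the binary entropy function, and use the sign of its derivative $\ln\!\bigl((1-p)/p\bigr)$ on $(0,1/2)$ to conclude monotonicity. The only cosmetic difference is that the paper also computes the second derivative to locate the maximum at $p=1/2$, whereas you read off the sign of $H'$ directly; your added remarks on the equivalence $p_f<p_w\Leftrightarrow p_f<1/2$ and on the chain-rule composition are refinements the paper leaves implicit.
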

\begin{proof}
When $p_{f}(t)=0$ or $p_{f}(t)=1$, $ln(1-p_{f}(t))$ or $ln(p_{f}(t))$ is not defined. Hence we assume $p_{f}(t) \in (0,1)$. 
 Substitute $p_{w}(t)$ with $1-p_{f}(t)$ in equation (12). Then we get:   \\
   $$ E(t)=-((1-p_{f}(t))*ln(1-p_{f}(t))+p_{f}(t)*ln(p_{f}(t)))$$
        $$\quad\quad =-ln(1-p_{f}(t))+p_{f}(t)*(ln(1-p_{f}(t))-ln(p_{f}(t)))$$ 
  Regard $p_{f}(t)$ as an variable, the first order  derivative  and second order derivative of $E(t)$ are:
  $E(t)^{'}=ln(1-p_{f}(t))-ln(p_{f}(t))$, 
 $E(t)^{''}=-((1-p_{f}(t))*p_{f}(t))^{-1}$.
 As $p_{f}(t) \in (0,1)$, $E(t)^{''}<0$. Therefore $E(t)$ achieves the maximum value when  $E(t)^{'}=0$ namely
 $ln(p_{f}(t))=ln(1-p_{f}(t))$
Finally, we get $p_{f}(t)=\frac{1}{2}$. As $p_{f}(t)$ increases monotonously, $E(t)$ increases monotonously when  $p_{f}(t)<\frac{1}{2}$. Hence Theorem 1 is proved.
\end{proof}
\ifCLASSOPTIONcompsoc
  \section*{Acknowledgments}
\else
  \section*{Acknowledgment}
\fi

The authors would like to thank all the members in our research group and the anonymous reviewers.

\ifCLASSOPTIONcaptionsoff
  \newpage
\fi


%
%
%
\bibliographystyle{IEEEtran}
\bibliography{IEEEfull,reference}
\end{document}